\documentclass[runningheads]{llncs}

\usepackage[T1]{fontenc}
\usepackage{graphicx}

\usepackage{hyperref}
\usepackage{color}

\usepackage{amsfonts,amsmath,amsthm}
\usepackage{url}
\usepackage{lineno}

\usepackage{todonotes}
\usepackage{algorithm,algpseudocode,booktabs}
\usepackage{tikz}
\usepackage{pgfplots,pgfplotstable}
\usepackage{csvsimple}
\usepackage{bbm}
\usepackage{esvect}
\usepackage{multirow}
\usepackage{amssymb}
\usepackage[normalem]{ulem}
 \usepackage{subcaption}

\usepackage{xspace} 
\usepackage{xparse} 

\newcommand\newmath[2]{\newcommand#1{\ensuremath{#2}\xspace}}
\newcommand\renewmath[2]{\renewcommand#1{\ensuremath{#2}\xspace}}

\newcommand\newmathope[2]{\newcommand#1{\ensuremath{\operatornamewithlimits{#2}}\xspace}}

\usepackage{tikz,pgf}
\usetikzlibrary{arrows,automata,shapes,calc,positioning,intersections,backgrounds}

\newmath{\N}{\mathbb{N}}
\newmath{\Z}{\mathbb{Z}}
\newmath{\Q}{\mathbb{Q}}
\newmath{\R}{\mathbb{R}}

\newmathope{\argmin}{\arg\min}
\newmathope{\argmax}{\arg\max}

\renewmath{\Pr}{\mathbb P}
\newmath{\PP}{\mathbf P}
\newmath{\Bad}{\mathsf{Bad}}
\newmath{\Opt}{\mathsf{Opt}}
\newmath{\Dist}{\mathcal D}
\newmath{\M}{\mathcal M}
\newmath{\Supp}{\mathsf{Supp}}
\newmath{\E}{\mathbb E}
\newmath{\UPre}{\mathsf{UPre}}
\newmath{\Good}{\mathsf{Good}}
\newmath{\Ugly}{\mathsf{Ugly}}

\newmath{\Rew}{\mathsf{Loss}}
\newmath{\AReward}{\mathsf{ARew}}

\newmath{\FPaths}{\mathsf{Paths}}
\newmath{\GoodPaths}{\mathsf{GoodPaths}}
\newmath{\BadPaths}{\mathsf{BadPaths}}
\newmath{\first}{\mathsf{first}}
\newmath{\second}{\mathsf{second}}
\newmath{\last}{\mathsf{last}}
\newmath{\len}{\operatorname{len}}
\newmath{\States}{\mathsf{States}}
\newmath{\Val}{\mathsf{Val}}
\newmath{\Win}{\mathsf{Win}}
\newmath{\ValSafety}{\mathsf{ValSafety}}
\newmath{\ValReach}{\mathsf{ValReach}}
\newmath{\Shield}{\mathsf{Shield}}

\newmath{\opt}{\mathsf{opt}}
\newmath{\Cyl}{\mathsf{Cyl}}
\newmath{\GoodCyl}{\mathsf{GoodCyl}}

\newmath{\NN}{\mathsf{NN}}

\newmath{\reward}{\mathsf{reward}}
\newmath{\numsamples}{\mathsf{count}} 
\newmath{\children}{\mathsf{children}}
\newmath{\mctsvalue}{\mathsf{value}}
\newmath{\total}{\mathsf{total}}
\newmath{\I}{\mathcal{I}}
\newmath{\iter}{\mathsf{iter}}

\newmath{\A}{\mathscr A}
\newmath{\Apsi}{\A^{\psi}}
\newmath{\Aphi}{\A^{\varphi}}

\newmath{\cyl}{\mathsf{Cyl}}
\newmath{\invalpha}{\alpha^{\text{-}1}}

\newcommand{\cont}{\mathsf{\neg STOP}}

\newcommand{\maxi}{\mathsf{maxi}}
\renewcommand{\stop}{\mathsf{STOP}}

\allowdisplaybreaks

\usepackage{thm-restate,thmtools}

\begin{document}

\title{Algorithms for Robbins' Problem using \\ Markov Decision Processes\thanks{This work was primarily conducted when Dr. Anirban Majumdar was a post-doctoral researcher at ULB, and Dr. Léonard Brice was a doctoral student at ULB.}
}
\titlerunning{Algorithms for Robbins’ Problem using MDPs}
%
\author{Léonard Brice\inst{1}
\and F. Thomas Bruss\inst{2} \and \\
Anirban Majumdar\inst{3}
\and  Jean-Fran\c{c}ois Raskin\inst{2}
}

\institute{Institute of Science and Technology Austria, Klosterneuburg, Austria \email{leonard.brice@ista.ac.at}
\and
Université Libre de Bruxelles, Bruxelles, Belgium
\email{\{thomas.bruss,jean-francois.raskin\}@ulb.be}
\and
Tata Institute of Fundamental Research, Mumbai, India
\email{anirban.majumdar@tifr.res.in}
}
\authorrunning{Brice, Bruss, Majumdar and Raskin}
\maketitle              
\begin{abstract}
In this paper, we consider Robbins' problem, which is a full information variant of the well-known {\em secretary selection problem}. In this version of the problem, the goal is to minimize the expected rank of the selected candidate among $n$ that are interviewed sequentially, and a decision to select or not the $m^{th}$ candidate needs to be taken right after the interview (so without seeing the last $n-m$ candidates and without recall). We first show how to model instances of Robbins' problem as infinite Markov Decision Processes (MDPs). Then we propose several finite-state abstractions of these MDPs that allow us to approximate the value of the problem for fixed $n$. While it is known that the full memory of past candidates' values is necessary for optimal expected rank minimization, making the analysis of the problem challenging, we highlight simple memory structures that are sufficient for obtaining near-optimal selection strategies. Additionally, we provide approximate values for Robbins' problem for numbers of candidates $n$ up to 100 for which no good approximations were previously known (the exact value is only known for instances where $n \leq 4$ and numerical approximations were for small values of $n$ not exceeding one digit), for all $n : 5 \leq n \leq 100$, we give better approximation than what was previously known.
\end{abstract}

\keywords{Robbins' problem \and Secretary problem \and Markov decision Processes}


\section{Introduction}
\label{sec:intro}

Imagine the high-stakes world of professional cycling, where a director of a top-tier team is on the hunt for the perfect new team member. There are a known number $n$ of candidates, and the director must interview them one by one. During the interview, the director asks questions about the candidate's weight, ${\sf FTP}$ number (Joost-Pieter Katoen, to whom this paper is dedicated, will know what this acronym means), ${\sf VO}_2{\sf max}$ value, and more. The director is exceptionally precise and has a function that maps the candidate's skills and attributes to the interval $[0,1)$, and this function returns small values for the best cyclists, and it allows to compare the cyclists in a fine-grained way\footnote{Candidates like M(v)DP, pun intended, would be mapped very close to $0$ (best value) and should hopefully be selected by the director.}. For example, this allows him to \emph{rank} the current candidate against all previous ones, and as a specialist (the best candidate having rank $1$, and the worst one having rank $n$), he also knows the probability distribution of these values among the set of candidates being interviewed as they are assumed to be randomly taken from a set of cyclists with world tour level.

Because other teams are also looking for the best cyclists, after each interview, the director must make an immediate decision about whether to hire the cyclist and add this candidate to his team, with no opportunity for second-guessing. Furthermore, if the first $n-1$ candidates are rejected, the director is obliged to hire the last one.

The challenge is to devise a \emph{selection strategy} that minimizes the expected rank of the selected cyclist among the $n$ cyclists that he may potentially interview. In essence, how can the director ensure he is getting the best possible additional team member based on the candidate's rank in expectation among the $n$ cyclists that he can review? And for the more mathematically oriented director (like Joost-Pieter Katoen would be), he might intriguingly ask: what is the limiting value of this expected rank as the number of candidates becomes large, \emph{i.e.}, tends to infinite?

The problem that we sketched above not only tests the director's decision-making skills, but also his ability to strategize under pressure, ensuring the team gets a top-notch cyclist who can help them race to victory. This is a variant of the \emph{secretary selection problem} (the fourth variant~\cite{bruss2005}), and it is also known as \emph{Robbins' problem}. It still remains open to find an optimal strategy for this problem, as well as to determine the limiting value when $n$ tends to infinity.

 Robbins  presented this problem at the end of his memorable talk on the International Conference on Optimal Stopping and Selection in 1990 (U. of Massachusetts). It  is deep and an easy-to-describe representative of a whole class of problems in Probability Theory, which attract interest for several reasons. This is the class of problems of full history-dependence.
Before formally defining Robbins' problem, let us intuitively explain the notion of a strategy on the previous example of selecting the best cyclist.

Here we could assume that the director will apply the following strategy:
when interviewing the $i$-th candidate (with $1 \le i \le n$), if the \emph{measure} of the candidate is less than or equal to $2/{n-i+2}$, then select that candidate. The above strategy is an instance of a more general class of strategies $\sigma_n^c$: select the first candidate whose measure is less than or equal to $c/{n-i+c}$.
Let us elaborate this strategy on a particular instance of the cyclist-selection problem described above. 

Assume that the director has 4 interviews scheduled.
At the first interview, assume that the value of the candidate is $0.7$. Recall that small values in $[0,1)$ are associated to the best cyclists, this one is rather average and as
$$0.7 > 2/{4-1+2} = 2/5,$$
the director bets to see a better candidate in the remaining $3$ interviews, and therefore decides to "not stop".

Now assume that the measure of the second candidate is $0.55$. Again, since 
$$0.55 > 2/{4-2+2} = 2/4,$$
the director decides to "not stop" and wait for hopefully better candidates to come.

Finally, assume that the third candidate has measure $1/3$. Since
$$1/3 < 2/{4-3+2} = 2/3,$$
the team director decides to "stop" and select this candidate.

Let us note a few remarks. First, let us note that, by stopping now, the director chooses the best candidate seen so far (in this example), so the candidate's \emph{relative rank} is $1$. However, the team director has not seen the last candidate, and hence, the right question to ask is the following: what is the expected rank of the selected ($3$rd) candidate knowing that the last candidate is taken uniformly at random from $[0,1)$. According to that, the probability that the last candidate is better than the selected one is $1/3$ (in that case, the \emph{absolute rank}, or simply, \emph{rank} of the selected candidate is $2$), and the probability that it is worse than the selected one is $2/3$ (in that case, the rank of the selected candidate is $1$). Therefore, the expected rank of the selected candidate is:
$$1/3.2 + 2/3.1 = 4/3 = 1.33.$$

Now, let us take the time here to note that it is also mathematically relevant to ask the following question: what is the expected rank of the candidate selected by the strategy $\sigma_n^{c=2}$. Further, one can also ask how this expected rank evolve  as $n \to \infty$. Both of these problems for the  class of strategies $\sigma_n^{c}$ have been studied in details in the paper~\cite{bruss1993minimizing}. 
We also note that the straightforward strategy described above does not utilize the entire history of the values of the candidates observed so far. Consequently, this strategy cannot be optimal regardless of the value of $c$, since, as shown in~\cite{bruss1996}, an optimal strategy for the Robbins' problem needs to fully depend on the history of the measure of the candidates seen so far.

We now formally define the Robbins' problem and the relevant notations.

\paragraph{{\bf Definition of Robbins' problem.}}
Let $n\in \N$ be a fixed natural number, and let $X_1, X_2, \cdots, X_n$ be independent and identically distributed (i.i.d.) random variables uniform on $[0,1).$
We can observe them sequentially, \emph{i.e.}, in the order $X_1, X_2, \cdots,$ and we must select {\it exactly one} of them. A selection is only possible at the time of observation, and if $X_m$ is selected, then the  decision is irrevocable
and the game is finished. At step $n$ we see the whole picture, and if we have selected $X_m$, with $1\le m \le n$, then we occur the  loss
\begin{align} 
L_m=\sum_{j=1}^n \mathbbm{1}\{X_j \le X_m\}, 
\end{align} 
where $\mathbbm{1}\{A\}$ is the \emph{indicator function} of the event $A$, formally: $\mathbbm{1}\{x\} = 1$ if $x \in A$, and otherwise $\mathbbm{1}\{x\} = 0$.

In other words, if we denote the increasing order statistics of the $X_j$ by
\begin{align}
X_{1,n} \le X_{2,n} \le  \cdots \le X_{n,n}
\end{align} 
and if we have chosen $X=X_{j,n},$ then our loss is $j$.

Note that we see the {\it values} and not only their relative ranks, but the {\it loss is the rank} of the accepted observation!  At time $n$, the random variable $L_m$ becomes deterministic, taking as value the final rank of $X_m$
among the whole sample $X_1, X_2, \cdots, X_n$. If we must choose exactly one variable,
what sequential strategy will minimize the expected loss?\footnote{Robbins announced this problem by saying ``Finally, here is the problem which I'd like to see solved before I die''. Robbins'  wish did not realize. He died February 12th, 2001, and unfortunately the main part of the problem is still open today.}

A \emph{strategy} $\sigma_n$, for a fixed $n \in \N$, is a function that, given a sequence of numbers from $[0,1)$ of length less than or equal to $n$, decides to accept the last value (or, to continue) using $\stop$ (resp.,~$\cont$), with the additional condition that $\sigma_n$ must accept exactly one number along all possible $n$-length sequences. Formally, 
$$\sigma_n : [0,1)^{\le n}\to \{\stop, \cont\},$$ 
such that $\forall x_1, \ldots, x_n \in [0,1)^n$: 
$\exists j \le n: \sigma(x_1 \ldots x_j) = \stop$; 
and for all $j' < j, \sigma(x_1 \ldots x_{j'}) = \cont$.
A \emph{solution} of Robbins' problem is then an \emph{optimal strategy} $\sigma_n^*$ that minimizes the expected loss: 
$$\E^{\sigma_n^*}(L) = \inf_{\sigma_n} \E^{\sigma_n}(L) = v_n.$$

We also note that solving Robbins' problem for the uniform distribution will also mean solving it for any absolute continuous distribution $F$. Indeed, then we have a one-to-one correspondence between $X_1,\cdots,X_n$ and $F(X_1), \cdots, F(X_n)$.
Since $\Pr(F(X) \le y) = \Pr(X \le F^{-1}(y))$, and the latter follows the distribution function $F$, it equals $F(F^{-1}(y)) = y$, \emph{i.e.}, $F(X)$ is uniform on $[0,1)$.

\paragraph{{\bf The challenge of Robbins' problem as a secretary problem.}}
\label{par:secretary-problems}
We briefly explain why Robbins' problem stands out in the class of so-called secretary problems.
The overall probably best-known optimal stopping problem is the so-called (classical)
secretary problem, or "first" secretary problem. This is to maximize the probability of selecting with a single choice, with  no recall of preceding observations and with rank information, the
"best" rank (rank one) from a sequence of $n$ uniquely rankable objects. Its solution was published in~\cite{lindley1961}. 
The same problem with full information (\emph{i.e.}, the values and not only their ranks could be observed) was solved in~\cite{gilbert1966}. Minimizing the expected rank under rank information was proved to be harder, and was completely solved in~\cite{chow1964}. The same problem under full information is the fourth secretary problem, rounding up this two-by-two design of optimization problems, and this is Robbins' problem.  We know that its full solution is open, so the  "fourth" problem already presents a particular challenge in the class of secretary problems.

\paragraph{{\bf Known results for Robbins' problem.}}\label{BAna-RP}
Since for each $n$,  we need only to take $n$ decisions, by backward induction principle, for each $n$, an optimal strategy must exist. Let $v_n$ denote the corresponding optimal value, \emph{i.e.}, the minimal expected (final) rank.

\smallskip
The following results ((i) to (iv)) are proved in \cite{bruss1993minimizing,chow1964,assaf1996secretary}. 

\begin{itemize}
    \item [(i)]
$v_n$ is increasing in $n.$

\item[(ii)]
The sequence $(v_n)$ is bounded below (trivially by $L=1$) and bounded above by another known value from a related problem studied in \cite{chow1964}, which is $3.869\cdots$. This is one of the variants of the secretary problem (as described in the previous paragraph), 
where only the rank of the new draw within the current history is revealed and not the exact value. Since seeing the numerical values of the random variables allows us to rank them, obtaining the numerical values provides at least as much information as rank information. Therefore, one cannot achieve better results with rank information than by seeing the $X_j$. The optimal expected rank for rank information is $\tilde v=3.869\cdots$, which is the upper bound mentioned above. As a direct consequence, the following limit is known to exist:
\begin{align}
\tilde v = \lim_{n \to \infty} \tilde v_n \quad \text{exists and} \quad v = \lim_{n \to \infty} v_n \leq \tilde v,
\end{align}
where $\tilde{v}_n$ denotes the optimal value for $n$ draws under rank-information.
Note that the same reasoning applies to obtain an upper bound using memoryless strategies. This approach is summarized in the next paragraph. 

\item[(iii)]
Smaller $X_m$ have smaller ranks. Hence, it is intuitive that the values $X_m$ and the corresponding final ranks $L_m$ (which will be known at time $n$ only) should be positively correlated. Recall that {\it correlation} is a measure of dependence of one random variable of another one. We can compute it  and obtain (\cite{bruss1993minimizing}, (1.6) - (1.8)):
\begin{align}\label{corr}{\rm as ~}n\to\infty,~ \forall \,1\le m\le n:  {\rm corr}(X_m, L_m)= \sqrt{\frac{n-1}{n+1}} \to 1 .\end{align}
The strong positive correlation described in Equation~\ref{corr} suggests proposing a strategy by just looking at time $m$ at the observed value $X_m$ and to select it if and only if $X_m$ is smaller or equal to some threshold $\varphi(X_1, \cdots, X_{m-1}; X_m; n)$. This is a {\em threshold strategy}. If, moreover, we ignore at each step $m$ all preceding values $X_1, \cdots, X_{m-1},$ then we speak of a {\it memoryless} threshold strategy\footnote{In the community of applied probability, where  Robbins' problem originates, such strategies are referred to as {\em memoryless threshold strategies}. However, in the Formal Methods community, we refer to them as {\em counting strategies}, as the threshold used depends on the round number.}, or in short  {\it ml}-strategy. We denote the optimal value obtainable for $n$ observations in this restrained class by $\overline{v}_n.$
This strong positive correlation of the values $X_1, X_2, \cdots, X_n$ and their corresponding absolute ranks can be exploited to define a pattern of memoryless threshold strategies.  The precise optimal memoryless strategy is complicated to compute (see \cite{assaf1996secretary}); however, it is known that a threshold strategy using the threshold function $\varphi(n,m) = \frac{c}{n - m + c}$, \emph{i.e.}, the $m^{th}$ is accepted if this draw is less than or equal to $\frac{c}{n - m + c}$, with $c = 1.9469\cdots$ yields a good approximation. The upper bound obtained with this memoryless strategy is equal to $2.3318\cdots$ when $n$ tends to infinity. So this improves the upper bound that we obtain with $\tilde v$.

It is however important to note that it was shown that the memoryless optimal strategy can always be improved, see e.g.,~\cite{meier2017}. However, so far nobody has proved that the improvement will not disappear as $n\to \infty$. The author in~\cite{gnedin2007} studied a related problem, namely \emph{Poisson embedded Robbins' problem}, in which the optimal limiting strategy is indeed not a memoryless threshold strategy.

\item[(iv)]
Lower bounds of different levels for $v_n,$ and thus for $v$ can be obtained by a truncation argument (see~\cite[Section~$4$]{bruss1993minimizing}). We say we truncate the loss at level $j$ with $1 \le j\le n,~ j\in \N,$ if the loss generated by $X_m$ is defined to be equal to $\min\{j, L_m\}$. Therefore, clearly,
\begin{align}\label{BAna-RP(vi)}\forall 1\le j\le n: v_n(j) \le v_n.\end{align}
Truncating at the level $j=5,$ the authors in \cite{bruss1993minimizing} obtained the lower bound $L\approx 1.908$. 
\end{itemize}

\paragraph{{\bf  Intrinsic difficulties of Robbins' problem.}}

We seemingly cannot compute $v_n$ for large $n$ as we 
 do not know the optimal strategy in general. It is trivial for $n=1$, almost trivial for $n=2,$ and still easy for $n=3$, but that's it as far as we can say "easy".  The problem is that the optimal strategy is {\it fully history dependent}, which means that the optimal strategy depends at each step $m$ on the complete preceding cloud of values  $X_1, X_2, \cdots, X_{m-1}$. The  order in which the points arrived is irrelevant.
This is why it is the "cloud" $\{X_1, X_2, \cdots, X_{m-1}\}$ of the history which counts. The full history dependence is proved in~\cite{bruss1996}.

  \smallskip
From a decision-theoretical point of view, the interpretation of full history-dependence is as follows. There is no sufficient statistic for optimal decisions other than the whole history itself, that is, the cloud of all points in $[0,1)$ seen so far. This is not necessarily a serious problem for classes of problems where the influence of the history on the currently observed process can be shown to become irrelevant sufficiently quickly.
However, Robbins' problem is a representative in the class of fully history-dependent problems where this asymptotic irrelevance is not clear.

\smallskip
To understand the difficulty to compute the optimal strategy for $n$ points and the corresponding value $v_n$ precisely, we refer to Figure 1-Figure 4 of \cite{n=4}. The authors of \cite{n=4} coped with the challenge to solve the problem precisely for $n=4$. Their graphs for the composed acceptance regions of the corresponding optimal strategy are complicated and do not show an easy structure.

 \medskip
The idea
to compute, for larger $n$, the $v_n$  by the truncation method, described in (iv) of Section \ref{BAna-RP}, is seemingly hopeless. As shown in \cite{bruss1993minimizing}, the storage demands to implement the truncation method increases exponentially in both the number $n$ and the truncation level $j.$ Under the hypothesis that the capacity of computers increases exponentially, this is far away from the double-exponential growth we would need.

\smallskip
These combined difficulties may explain why many authors have stopped working on Robbins' problem.
Indeed, not much was contributed to Robbins' problem after the nineties.

In a recent paper~\cite{bruss2024}, several aspects of this full history dependence,  as well as their implications,  are investigated in detail. The author concludes that the intricacies of the problem require a new approach and argues that deep learning is, for several reasons, a promising candidate for this goal. However, the paper does not contain a  concrete description of the architecture of a suitable neural network. This contrasts our present paper essentially, because our approach, based on Markov decision processes, is \emph{concrete} throughout, and enables us to obtain new values of Robbins' problem for a range of interesting $n$.

\paragraph{{\bf Contributions.}} 
In this paper, we consider several abstractions of  Robbins' problem that lead to finite MDPs solvable with algorithms based on backward induction, as implemented in tools like \textsc{Storm}~\cite{storm2017}. 
Note that, if we consider the clouds simply as the successive states of the decision process, then this sequence indeed forms an MDP.
These finite MDPs are useful for at least two purposes. First, when $n$ is fixed, they allow us to compute an upper bound of the value $v_n$. We will demonstrate in the experimental section that the upper bounds obtained this way are better than the currently known upper bounds, which are derived from memoryless strategies. Second, by solving the underlying MDPs, we obtain strategies that can be practically implemented and whose performances surpass those of the previously known memoryless strategies. Furthermore, from these abstractions, we will highlight some properties that are crucial in the history for playing a good strategy. 

\smallskip
\noindent\textit{Discretization.}
The first abstraction is a straightforward discretization of the problem: we partition the interval $[0,1)$ in $d \in \mathbb{N}$ intervals of length $\frac{1}{d}$ and we abstract the draws by the interval in which they fall, and histories  by counting how many draws landed in each interval. We show that for a fixed $n$, when $d$ tends to infinity, this approximation scheme is giving the value $v_n$. The number of states of the underlying MDP increase proportionally in the number of draws and the number of possible histories of draws, the latter is given by the binomial coefficient $\binom{n+d-1}{n}$. As a consequence, this abstraction can be used only to approach the optimal value $v_n$ for small values of $n$ and $d$. Still, this MDP can be used to closely approximate $v_n$ for values of $n$ for which no good approximations were known so far. We report on this in the experiments section of the paper.

Note that, $\binom{n+d-1}{n}$ corresponds to the number of possibilities to place, without constraints, $n$ balls in $d$ places. If we limit our interest to single occupations (\emph{i.e.}, conflict-free) only, the number becomes $\binom{d}{n}$. Then it is easy to see that $\binom{d}{n}/\binom{n+d-1}{n} \to 1$ as $d$ goes to infinity. In other words, as $d$ grows, the non-conflict probability tends to 1. This argument can be used to show that this simple abstraction scheme preserves the limit value (see Theorem~\ref{thm:limit-completness}).

\smallskip
\noindent\textit{Limiting the history.}
In the second abstraction, we limit the histories, and so the memory, in the following way: instead of remembering the number of draws that fall into every interval in the history, we only consider the intervals of the $k$ best previous draws. This results in an MDP whose size is bounded by ${\cal O}(n \cdot d^{k+1})$. While this MDP only achieves the precision of the first approach when $k=n$, we observe experimentally that the value obtained using this MDP, which is proven to over-approximate the true value of  Robbins' problem, is close to the value computed with the first MDP even for small values of $k$ (e.g., $k=2$ or $k=3$). This implies that remembering only the values of the $k$ best previous candidates allows us to devise a good strategy for stopping on a candidate with a small expected rank.

\smallskip
\noindent\textit{Handling the bad draws carefully.}
In the third abstraction, we limit the way we record bad draws. In the finite-state MDP that keeps track of the $k$-best draws seen so far, defined in Section~\ref{sec:k-abs}, the number of states grows as ${\cal O}(n \cdot d^{k+1})$, making it difficult to consider large values for $d$.
As $n$ grows, $d$ must also be reasonably large to  approximate $v_n$ sufficiently well. Otherwise, the probability of conflicts in histories increases, reducing precision. Therefore, computing the value for the underlying MDP may become infeasible. However, if we remember the $k$ best draws, it is likely that the values in memory are "good candidates" (\emph{i.e.}, in $[0,l)$). For instance, if we remember the $k=2$ best draws and have seen $j$ draws taken uniformly at random, the probability that one is above $l$ becomes very small when $j$ is growing. Thus, for large $n$, it is reasonable to remember only the exact interval of the "good candidates" seen so far. Our next abstraction uses a non-uniform partitioning of $[0,1)$, with a precise partitioning for $[0,l)$ into intervals of size $\frac{1}{d}$, and only one interval for candidates in $[l,1)$. This reduces the number of states of the MDP to ${\cal O}(n \cdot l^{k+1})$. In Section~\ref{sec:experiments}, we will report on values for several choices of $l$, showing that, in practice, as $n$ grows, a relatively smaller $l$ suffices for a reasonable approximation of the value $v_n$. 

\paragraph{{\bf Structure of the paper.}} In Section~\ref{sec:prelims}, we introduce the necessary preliminaries. In Section~\ref{sec:d-abs}, we present the MDP of the first abstraction, which discretizes the interval $[0,1)$ into intervals of length $\frac{1}{d}$. In Section~\ref{sec:k-abs}, we introduce the second abstraction and its MDP formalization, in which only the interval of the $k$ best candidates are memorized in the history. In Section~\ref{sec:(k,l)-abs}, comes the third abstraction, in which only the interval of {\em good} candidates (those whose value is less than $l$) are precisely recorded. In Section~\ref{sec:experiments}, we present experimental results. In Section~\ref{sec:conclusion}, we draw  conclusions and hint on possible future works.

Missing proofs can be found in Appendix~\ref{sec:appendix}. All values for $n = 1$ to $100$ with $d = 500$, $1000$ and $k = 2$, and for the memoryless strategy of~\cite{assaf1996secretary} are also reported in Appendix~\ref{sec:appendix}.

\section{Preliminaries}
\label{sec:prelims}

We introduce here the definitions for Markov chains and Markov decision processes. The interested reader will find more about those models in~\cite{DBLP:books/wi/Puterman94,DBLP:books/daglib/0020348}.

A \emph{probability distribution} on a finite set $S$ is a function $\delta:S\to [0,1]$ such that $\sum_{s\in S}\delta(s)=1$.
We denote the set of all probability distributions on set $S$ by $\Dist(S)$. The support of a distribution $\delta\in \Dist(S)$ is $\Supp(\delta)=\{s\in S\mid \delta(s)>0\}$.

\begin{definition}[Markov chain]\label{def:mc}
	A (discrete-time)  Markov chain or an MC is a tuple $M=(S,s_{in},P,\Rew)$, where
	 $S$ is a  set of states,
  $s_{in} \in S$ is the initial state, 
		$P$ is a mapping from $S$ to $\Dist(S)$,
        and $\Rew$ is a partial mapping $\Rew: S\to \R$. 
\end{definition}

For states $s,s'\in S$, $P(s)(s')$ denotes the probability of moving from state $s$ to state $s'$ in a single transition, and we denote this probability $P(s)(s')$ as $P(s,s')$. A state $s \in S$ for which the loss mapping $\Rew$ is defined is called {\em final}. We assume here that for all final states $s$, $P(s,s)=1$ (\emph{i.e.}, final states are \emph{absorbing}). We denote by $\FPaths^{\Rew}_M$ the following set of finite paths $s_0 s_1 \dots s_n$ in $M$ such that:
  \begin{itemize}   
    \item $s_0=s_{in}$,
    \item $P(s_i,s_{i+1})>0$, for all $i:0 \leq i <n$,
    \item $s_n$ is a final state, \emph{i.e.}, $\Rew(s_n)$ is defined.
  \end{itemize}
So $\FPaths^{\Rew}_M$ is the set of  paths in $M$ that start in the initial state of $M$, and reach a final state. The probability of a path in $\rho=s_0 s_1 \dots s_n \in \FPaths^{\Rew}_M$ is equal to ${\sf Prob}(\rho)=\prod_{i=0}^{n-1} P(s_i,s_{i+1})$. Given $M$, we are interested in the expected loss obtained when executing $M$, which is noted $\mathbb{E}^M(\Rew)$ and equal to: 
$$\sum_{\rho \in \FPaths^{\Rew}_M} {\sf Prob}(\rho) \cdot \Rew({\sf last}(\rho))$$
\noindent
where ${\sf last}(\rho)$ denotes the final state of $\rho$.

\begin{definition}[Markov decision process]\label{def:mdp}
	A Markov decision process or an MDP is a tuple $\M=(S,A,s_{in},P,\Rew)$, where
	 $S$ is a  set of states,
	 $A$ is a finite set of actions,
        $s_{in} \in S$ is the initial state,
	 $P$ is a (partial) mapping from $S\times A$ to $\Dist(S)$, and \Rew is a partial mapping $\Rew: S\times A\to \R$.
\end{definition}
$P(s,a)(s')$ denotes the probability that action $a$ in state $s$ leads to state $s'$ and we denote this probability $P(s,a)(s')$ as $P(s,a,s')$. Therefore, if an action $a$ is admissible from a state $s$, we will have $\sum_{s'\in S} P(s,a,s')=1$. Otherwise, we will have $P(s,a,s')$ is undefined (denoted by $\bot$) for all $s'\in S$. 
A \emph{finite path} $\rho = s_0 a_0 s_1\ldots a_{i-1} s_i$ is a sequence of states and actions such that for all $t\in[0,i-1]$, we have $s_{t+1}\in \Supp(P(s_t, a_t))$. We denoted by $\FPaths_{\M}$ the set of all finite paths in $\M$.

For an MDP $\M$, a \emph{strategy} is a function $\sigma : \FPaths_{\M} \to A$ that maps a finite path $\rho$ to an action $a \in A$. 
Note that a strategy ${\sigma}$ in an MDP induces an MC $\M_{\sigma}$. Intuitively, this MC is obtained by unfolding $\M$ using the strategy $\sigma$ and using the probabilities in $\M$ to define the transition probabilities. Formally, $\M_{\sigma} = (\FPaths_{\M},s_{in},P_{\sigma}, \Rew)$ where for all  $\rho\in \FPaths_{\M}$, $P_{\sigma}(\rho)(\rho\cdot as) =  P(\last(\rho),a)(s)$,  if $\sigma(\rho)=a$, and equals to $0$ otherwise.
MDPs are sometimes coined as $1\frac{1}{2}$ player games in the literature. Accordingly, we call the agent that take the decisions, and so execute a strategy $\sigma$ along the execution the {\em protagonist}.

For a  history $h$, we write $\sigma_{\|h}$ for the strategy $h' \mapsto \sigma(hh')$.
We also, abusing notation, write $\E(\sigma)$ for the expected loss $\E^{\M_\sigma}(\Rew)$ obtained by the protagonist when using the strategy $\sigma$, that is, the expectation of loss in the MC $\M_{\sigma}$.

\section{An MDP abstraction for full $d$-discrete history}
\label{sec:d-abs}
\paragraph{{\bf The exact Robbins' problem MDP.}}
 Robbins' problem can be formalized as an (uncountable infinite) MDP, denoted $\M_n$, where states are the histories of draws received so far, and the set of actions is ${ {\sf STOP}, \neg {\sf STOP} }$. Whenever the $\neg {\sf STOP}$ action is played, a new point is drawn uniformly at random from the interval $[0,1)$ and added to the history of draws. When the {\sf STOP} action is played, or when the last draw has been drawn, the expected rank associated with the last draw is returned as a loss. Given a history $h$ and the draw $x_i$ where the action ${\sf STOP}$ is chosen, we can compute the expected loss by determining the current rank of $x_i$ within the history and adding the expected number of remaining draws that will be better than $x_i$. Solving this MDP involves finding the best strategy to minimize this expected rank.
A part of this MDP is shown in Figure~\ref{fig:cont-MDP}.

Unfortunately, this MDP contains an uncountable number of possible states, making it impractical to solve directly. To obtain an MDP with a finite number of states that can be solved algorithmically, we partition the interval $[0,1)$ into $d \in \mathbb{N}$ intervals of length $\frac{1}{d}$ and use the following function to discretize histories:

\begin{definition}
    Let $d,n \in \mathbb{N}$, the $d$-discretization function 
    $\alpha_d: [0,1)^{\leq n} \to \mathbb{N}^d \cap [0,n]^d$  maps histories with $l \leq n$ draws to vectors of $d$ natural numbers as follows: let $h=x_1, x_2, \cdots, x_l$ in $[0,1)^l$, 
        $\alpha_d(h) := \vv{y},$
     such that for every $0 \le i \le d-1$:
     $$\vv{y}[i] = \left |\left \{j \mid 1 \le j \le l \text{ and } x_j \in [\frac{i}{d}, \frac{i+1}{d} ) \right \} \right|.$$
\end{definition}

In the finite discrete abstraction, each state of the MDP will be a vector $\vv{y} \in \mathbb{N}^d \cap [0,n]^d$ that abstracts the histories so far by counting, for each interval $I$ of length $\frac{1}{d}$, the number of draws in $I$, along with the interval of the last draw received and the number of draws remaining. As with the infinite MDP, the set of actions is ${ {\sf STOP}, \neg {\sf STOP} }$. We formally define the MDP as follows:

\begin{figure}[t]
  \centering
  \begin{subfigure}{.45\textwidth}
  \centering
      \includegraphics[scale=0.25]{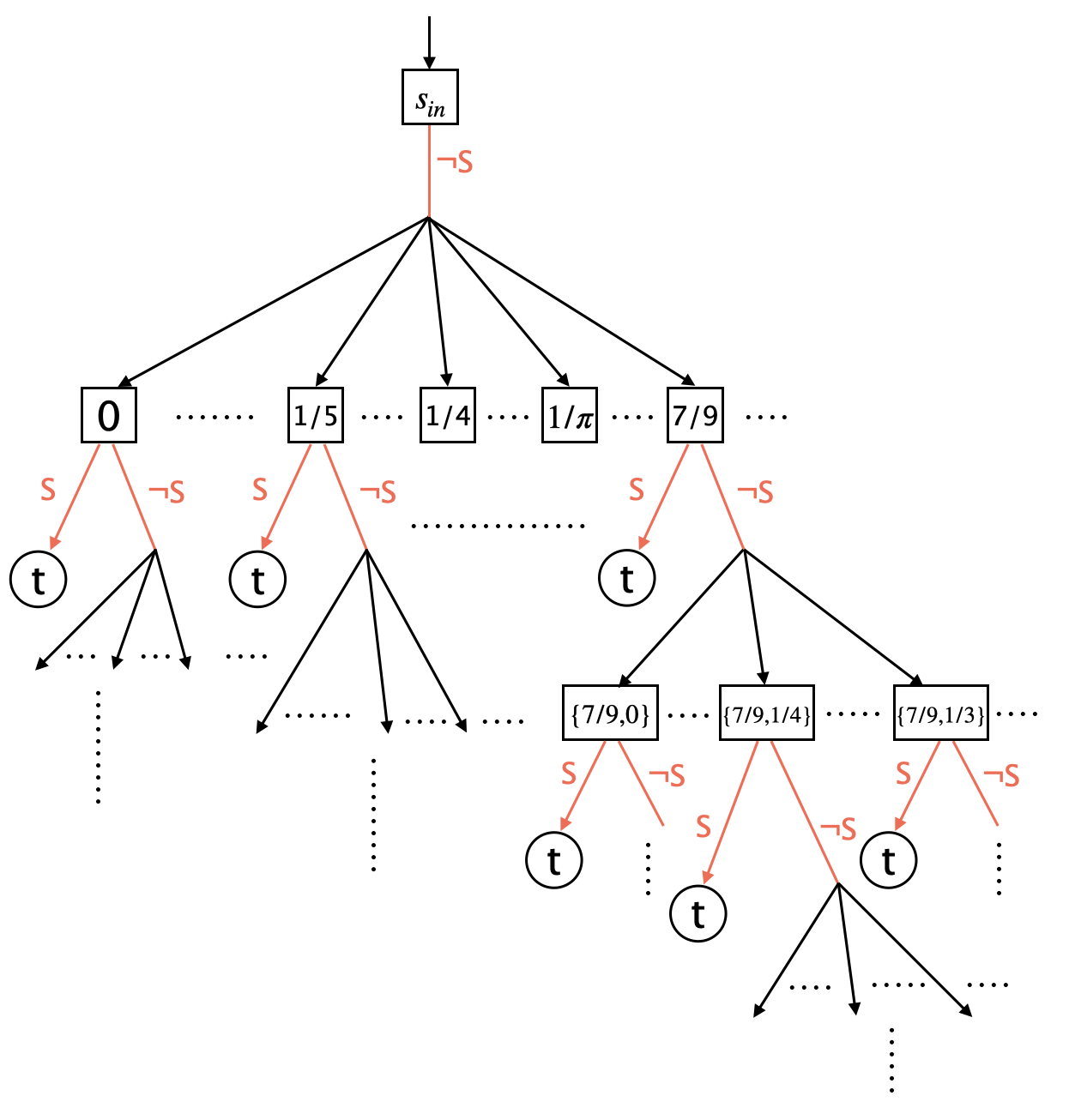}
      \caption{A part of a continuous MDP for Robbins' problem.}
      \label{fig:cont-MDP}
  \end{subfigure}
  \begin{subfigure}{.45\textwidth}
  \centering
      \includegraphics[scale=0.25]{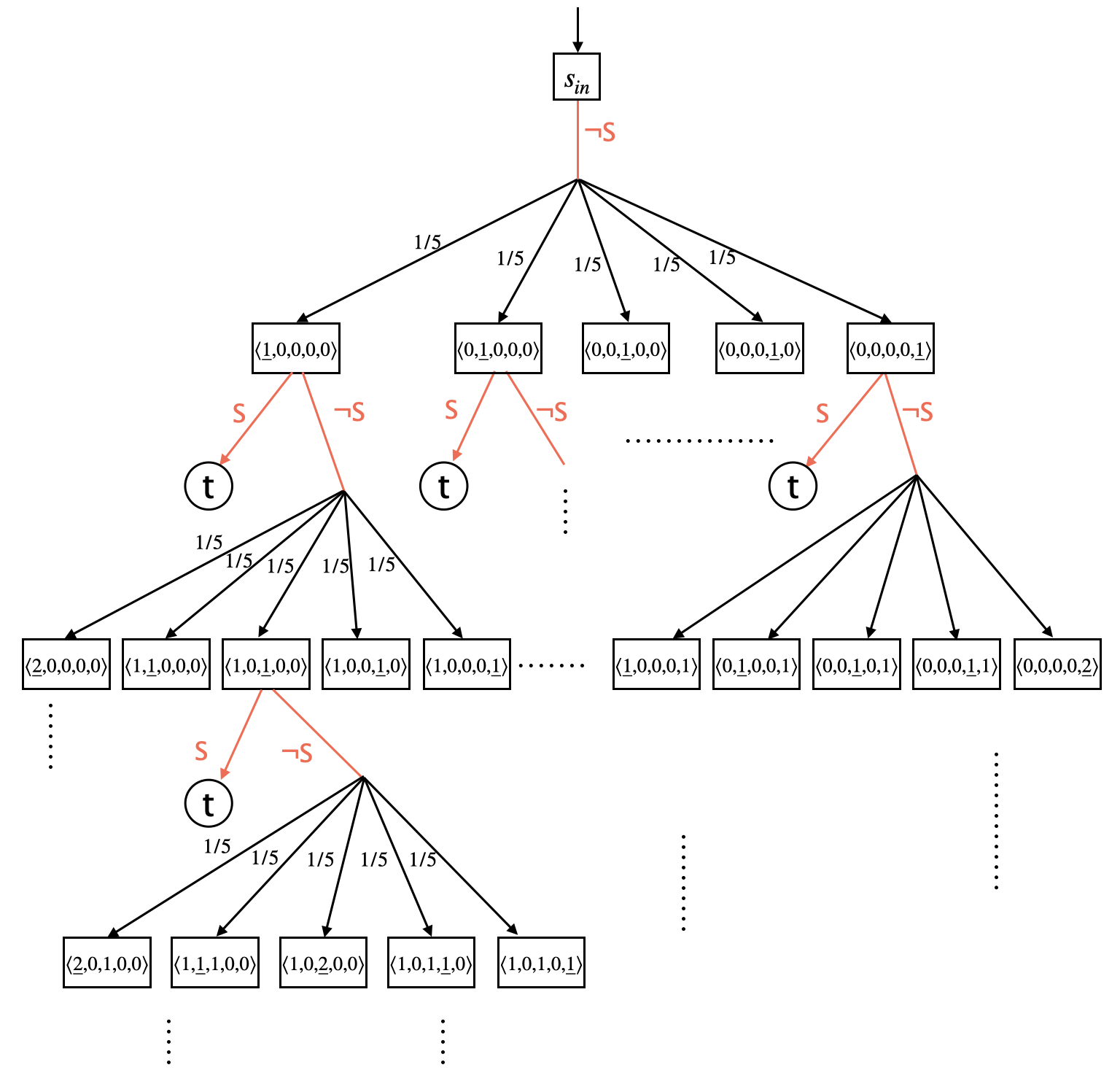}
      \caption{A part of a $d$-discrete MDP for Robbins' problem with $d = 5$.}
      \label{fig:d-abs-mdp}
  \end{subfigure}
\caption{Illustration of two types of MDPs: on the left, a continuous one for the exact Robbins' problem, and on the right, a $5$-discrete MDP. In both the MDPs, $S$ (resp.,~$\lnot S$) represents the protagonist's action $\stop$ (resp.,~$\cont$). For readability, we kept multiple copies of the  final state $t$. In the states of the discrete MDP, the interval of the most recent draw "$\last$" is marked with an underline; and further, for readability, the number of remaining draws "$r$" has been omitted from the states.}
\label{fig:cont-and-abs-mdp}
\end{figure}

\begin{definition}
\label{def:full-history-mdp}
For every $n, d \in \N$, we define the \emph{MDP} $\M_{n,d} = (S,A,s_{in},P,\Rew)$ where:
\begin{itemize}
    \item 
$S = \{(r, \last, \langle y_0, \cdots, y_{d-1} \rangle)\mid 0 \le r \le n, ~0 \le \last \le d-1, y_i \ge 0, ~\sum_{i=0}^{d-1}{y_i} = n-r\} \uplus \{s_{in}, t\}$, with $t$ being a final state;
\item $A = \{\cont, \stop \}$;
\item $s_{in} = (n,\vv{0})$ is the initial state, where $\vv{0}$ is the vector with all entries $0$;
\item 
\begin{itemize}
    \item 
for every $s = (n-1, m, \vv{y}) \in S\setminus \{s_{in}, t\}$, such that $\vv{y}[m] = 1$ and for all $j \neq m$, $\vv{y}[j] = 0$: 
$$P(s_{in},\cont,s)=1/d;$$
\item 
for every $s = (r, \last, \vv{y}) \in S\setminus \{s_{in}, t\}$ with $r > 0$, and for every $0 \le m \le d-1$,
$$P(s,\cont,s')=1/d,$$
where $s' = (r-1, m, \vv{y'}) \in S\setminus \{s_{in}, t\}$, such that $\vv{y'}[m] = \vv{y}[m] + 1$, and for all $j \neq m$, $\vv{y'}[j] = \vv{y}[j]$;
\item for every $s \in S\setminus \{s_{in}, t\}$: $P(s,\stop,t)=1$;
\item all other probabilities are $0$;
\end{itemize}

\item 
for every $s = (r, m, \vv{y})  \in S \setminus \{s_{in}, t\}:$
\begin{align}\label{eq:exp-rank-stop-full-hist}
    \Rew(s, \stop) = \sum_{i=0}^{m-1} \vv{y}[i] + \frac{\vv{y}[m] +1 }{2} + r\cdot\frac{2m+1}{2d}
\end{align}
\end{itemize}
\end{definition}

Intuitively, the first expression above counts the number of draws that are strictly before the current one, the second expression is the average of the number of draws in the same interval as the last draw, and the last expression is the expected number of draws that will be smaller among the remaining draws. Below, Lemma~\ref{lem:correct-full-hist} formalizes the correctness. 
Notice that, such an MDP is a directed acyclic graph. A part of this MDP with $d = 5$ is shown in Figure~\ref{fig:d-abs-mdp}.

The \emph{objective} then is to minimize the expected total loss in $\M_{n,d}$.

\begin{restatable}[Correctness]{lemma}{dcorr}
\label{lem:correct-full-hist}
    Fix $n, d$. Let $x_1, \cdots, x_{n-r}$ be the observations seen so far as in the description of Robbins' problem. Let $\vv{y}$ be the $d$-discretization of the sequence, \emph{i.e.}, $\vv{y} = \alpha_d(x_1, \cdots, x_{n-r})$ and let $x_{n-r} \in [m/d, (m+1)/d)$. 
    Let $\M_{n,d}$ be the MDP as defined in Definition~\ref{def:full-history-mdp} and let $s = (r,m,\vv{y})$.
    Then the conditional expected rank of $X_{n-r}$ given history $(\vv{y},m)$ is:
    $\E(L_{n-r}|(\vv{y},m)) = \Rew(s,\stop)$.
\end{restatable}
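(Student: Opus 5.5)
We decompose the loss into past and future draws. Write $L_{n-r} = \sum_{j=1}^{n} \mathbbm{1}\{X_j \le X_{n-r}\}$ and split the sum into three groups: indices $j \le n-r$ with $X_j$ in an interval strictly below $[m/d,(m+1)/d)$; indices $j \le n-r$ with $X_j$ in the same interval as $X_{n-r}$ (including $j=n-r$ itself); and the future indices $j > n-r$. By linearity of expectation, $\E(L_{n-r}\mid(\vv{y},m))$ is the sum of the three conditional expectations, and we match each to one term of~\eqref{eq:exp-rank-stop-full-hist}.

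\emph{First group.} Conditioning on $(\vv{y},m)$ fixes how many past draws fall in each interval. Every draw in an interval $[i/d,(i+1)/d)$ with $i<m$ is strictly smaller than $X_{n-r}$. Hence this group contributes exactly $\sum_{i=0}^{m-1}\vv{y}[i]$, and draws in intervals above $m$ contribute $0$.

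\emph{Second group.} Conditionally on the counts, the draws that fall in a given interval are i.i.d.\ uniform on that interval, since restricting a uniform variable to a subinterval gives a uniform law there. So the $\vv{y}[m]$ draws in interval $m$, including $X_{n-r}$, are exchangeable and almost surely distinct. The rank of $X_{n-r}$ among them is therefore uniform on $\{1,\dots,\vv{y}[m]\}$, with mean $(\vv{y}[m]+1)/2$. The main subtlety is justifying this exchangeability: $X_{n-r}$ is singled out as the last draw, but the conditioning event treats the draws within the interval symmetrically, and the draws are independent. This symmetry is what licenses the uniform-rank claim.

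\emph{Third group.} Each future draw $X_j$, $j>n-r$, is independent of the past and uniform on $[0,1)$. Hence $\Pr(X_j \le X_{n-r}\mid \cdot) = \E(X_{n-r}\mid(\vv{y},m))$. Given that $X_{n-r}$ lies in interval $m$, it is uniform there, so this conditional expectation is $(2m+1)/(2d)$. Summing over the $r$ remaining draws gives $r(2m+1)/(2d)$.

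Adding the three contributions yields $\Rew(s,\stop)$. The only delicate point is the conditional uniformity and exchangeability used in the second and third steps; we handle it by writing the joint density of $(X_1,\dots,X_{n-r})$ restricted to the event $\{\alpha_d=\vv{y},\ X_{n-r}\in[m/d,(m+1)/d)\}$ and observing that it is constant on that event.
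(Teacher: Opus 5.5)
Your proof is correct and follows essentially the same route as the paper, which writes $\E(L_{n-r}\mid(\vv{y},m)) = 1 + \sum_{j<n-r}\Pr(X_j<X_{n-r}\mid\cdot) + \sum_{j>n-r}\Pr(X_j<X_{n-r}\mid\cdot)$. It evaluates the past sum as $\sum_{i<m}\vv{y}[i] + (\vv{y}[m]-1)/2$ and each future term as $(2m+1)/(2d)$; you get the same decomposition by folding the self-term $1$ into your same-interval group to obtain $(\vv{y}[m]+1)/2$ directly. Your explicit justification of conditional uniformity and exchangeability, via the constant joint density on the conditioning event, supplies a step the paper only asserts.
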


\begin{proof}
    \begin{align}
    \label{eq:correctness}
        \E(L_{n-r}|(\vv{y},m)) = 1 & + \sum\limits_{j<{n-r}}\Pr(X_j<X_{n-r}\mid (\vv{y},m))\nonumber \\
        & + \sum\limits_{j>{n-r}}\Pr(X_j<X_{n-r}\mid (\vv{y},m))
    \end{align}
    Now, for $j<{n-r}$,
    \begin{align}
    \label{eq:corr-1}
    \sum\limits_{j<{n-r}}\Pr(X_j<X_{n-r}\mid (\vv{y},m))   
     = \sum_{i=0}^{m-1} \vv{y}[i] +
      \frac{\vv{y}[m] - 1 }{2}
    \end{align}
      And, for $j>{n-r}$,
      \begin{align}
      \label{eq:corr-2}
      \Pr(X_j<X_{n-r}\mid (\vv{y},m)) & = \Pr(X_j<X_{n-r}\mid X_{n-r} \in [m/d, (m+1)/d)) \nonumber\\
      & = \int\limits_{m/d}^{(m+1)/d} \Pr(X_j < u)\cdot \frac{1}{1/d} du\nonumber\\
      & = d\cdot \left.\frac{u^2}{2} \right|_{m/d}^{(m+1)/d}\nonumber\\
      & = \frac{2m+1}{2d}
      \end{align}
      Thus, plugging Equations \eqref{eq:corr-1} and \eqref{eq:corr-2} into Equation \eqref{eq:correctness}, the lemma follows.
\end{proof}

The following two theorems characterize the quality of the approximation that we obtain with this discretization scheme. The proof of the first one is a direct consequence of Lemma~\ref{lem:correct-full-hist}, whereas the proof of the second theorem is more involved and deserves a detailed argument.

\begin{theorem}[Upper-bound]
\label{thm:approx}
For every $n$ and $d \in \mathbb{N}$, the minimum expected loss in $\M_{n,d}$ is at least $v_n$.
\end{theorem}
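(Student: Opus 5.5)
The plan is to show that every strategy of $\M_{n,d}$ can be simulated by a strategy of the original Robbins' problem with exactly the same expected loss. Since $v_n$ is the infimum over all strategies of the original problem, the minimum over $\M_{n,d}$ (attained, the MDP being finite and acyclic) is then at least $v_n$.

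\textbf{Lifting a strategy.} Fix a strategy $\tau$ of $\M_{n,d}$. Since $\M_{n,d}$ is a finite DAG with a finite horizon, we may assume $\tau$ is memoryless. In any case, a path of $\M_{n,d}$ reaching $(r,m,\vv{y})$ is determined by the sequence of intervals of the draws. We define $\sigma_n$ on $[0,1)^{\le n}$ by
\[
\sigma_n(x_1\cdots x_j) := \tau(\text{path through the states } (n-i,\, m_i,\, \alpha_d(x_1\cdots x_i)),\ i\le j),
\]
where $m_i$ is the index with $x_i\in[m_i/d,(m_i+1)/d)$. We force $\stop$ at $j=n$, which is harmless since $\M_{n,d}$ also has only $\stop$ available when $r=0$. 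This $\sigma_n$ is a legal strategy: it is measurable, as it is piecewise constant on products of intervals, and it stops exactly once.

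\textbf{Matching laws.} Under i.i.d.\ uniform draws, the interval index of each $X_i$ is uniform on $\{0,\dots,d-1\}$ and independent across $i$. Hence the process $(n-i,m_i,\alpha_d(X_1\cdots X_i))_i$, stopped according to $\sigma_n$, has exactly the law of the Markov chain $(\M_{n,d})_\tau$, with transition probability $1/d$ per interval. In particular, the stopping time $T$ of $\sigma_n$ is measurable with respect to the discretized history, and the joint law of $T$ and the stopping state $s_T=(n-T,m_T,\vv{y}_T)$ coincides with that of the stopping state in $(\M_{n,d})_\tau$.

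\textbf{Equal expected loss.} By the tower property,
\[
\E^{\sigma_n}(L)=\sum_{s}\Pr(s_T=s)\,\E(L_T\mid s_T=s).
\]
By Lemma~\ref{lem:correct-full-hist}, $\E(L_{n-r}\mid(\vv{y},m))=\Rew(s,\stop)$, so this sum equals $\E(\tau)$ in $\M_{n,d}$, and therefore $v_n\le \E^{\sigma_n}(L)=\E(\tau)$.

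\textbf{Main obstacle.} The delicate point is applying Lemma~\ref{lem:correct-full-hist} on the event $\{T=n-r, s_T=s\}$ rather than for a fixed time. This requires that conditioning additionally on the event that $\sigma_n$ did not stop earlier does not change the conditional distribution of the positions of $X_1,\dots,X_{n-r}$ within their intervals, nor of the future draws. This holds because that event is a function of the interval indices alone, and given the indices, the within-interval offsets are independent uniforms. It is also what justifies the term $(\vv{y}[m]+1)/2$: by exchangeability, the draws sharing interval $m$ with $x_{n-r}$ each lie below it with probability $1/2$. With that observation, the conditional expectation on the event equals $\Rew(s,\stop)$ and the argument goes through.
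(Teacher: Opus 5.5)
Your proposal is correct and follows the route the paper intends. The paper only states that Theorem~\ref{thm:approx} is a direct consequence of Lemma~\ref{lem:correct-full-hist}: any strategy of the discretized MDP is a strategy of the original problem that looks only at interval indices, and its stopping losses equal the conditional expected ranks. Your simulation argument makes this explicit, and your observation that the stopping event depends only on the interval indices supplies the one detail the paper leaves implicit. Given those indices, the within-interval offsets and the future draws remain i.i.d.\ uniform, so Lemma~\ref{lem:correct-full-hist} still applies on the event $\{T=n-r,\ s_T=s\}$.
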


\begin{restatable}[Limit-completeness]{theorem}{LC}
\label{thm:limit-completness}
    For every $n$, as $d \to \infty$, the minimum expected loss in $\M_{n,d}$ converges to $v_n$. 
\end{restatable}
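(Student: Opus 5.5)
Write $w_{n,d}$ for the minimum expected loss in $\M_{n,d}$. By Theorem~\ref{thm:approx} we have $w_{n,d} \ge v_n$ for every $d$, so it suffices to prove $\limsup_{d\to\infty} w_{n,d} \le v_n$. The plan is to transfer an optimal strategy $\sigma_n^*$ of the continuous problem to the discrete MDP, with a loss of at most $O(n^2/d)$ in expected rank.

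\emph{Step 1: the coupling.} Take i.i.d.\ uniform draws $X_1,\dots,X_n$. The discrete MDP is realised by the indices $\lfloor d X_j\rfloor$, since each index is uniform on $\{0,\dots,d-1\}$. Let $C_d$ be the conflict event that two draws fall in the same interval of length $1/d$. By a union bound, $\Pr(C_d) \le \binom{n}{2}/d$; this matches the ratio $\binom{d}{n}/\binom{n+d-1}{n}\to 1$ from the introduction.

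\emph{Step 2: a discrete strategy simulating $\sigma_n^*$.} A discrete strategy only sees the sequence of intervals $I_1,\dots,I_j$, and it may depend on the whole path of $\M_{n,d}$, which records the order of the draws. It may therefore use randomization. Alternatively, it may use a deterministic choice of representative points, for instance the midpoints $u_j=(2I_j+1)/(2d)$, or points chosen to minimise the conditional loss; since the minimum over deterministic strategies equals the minimum over randomized ones, either works. Define $\tau(I_1\dots I_j) = \sigma_n^*(U_1,\dots,U_j)$, where $U_i$ is resampled uniformly inside interval $I_i$. Conditionally on the interval sequence, $(U_1,\dots,U_n)$ has exactly the law of $(X_1,\dots,X_n)$. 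Hence the stopping index $T$ of $\tau$ together with the true draws $U$ is distributed as under $\sigma_n^*$ in the continuous problem, and
\[
\E\bigl[\textstyle\sum_j \mathbbm{1}\{U_j\le U_T\}\bigr] = v_n .
\]

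\emph{Step 3: comparing losses.} By Lemma~\ref{lem:correct-full-hist} applied at the stopping time $T$, the loss $\Rew$ charged by $\M_{n,d}$ equals the conditional expectation of the true rank given the interval information. Here we must check that the remaining draws are independent of the past, so that the lemma applies at a stopping time and not only at a fixed time. Therefore
\[
w_{n,d} \le \E^{\tau}(\Rew) = \E\bigl[L_T\bigr] = v_n .
\]
If this exact identity survives scrutiny, it even gives $w_{n,d}\le v_n$, and the theorem follows by squeezing with Theorem~\ref{thm:approx}.

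\emph{Main obstacle.} The delicate point is whether $\tau$ is a legitimate strategy of $\M_{n,d}$ as defined, namely a deterministic map on finite paths. If randomization is not allowed, we fall back on the midpoint version. Outside $C_d$, the relative order of the midpoints equals that of the true values, so the past-rank part of the loss is exact. The future part $r(2m+1)/(2d)$ differs from $r\,x_T$ by at most $n/(2d)$. The remaining difficulty is that $\sigma_n^*$, evaluated at midpoints rather than at the true values, could make different decisions. To control this, I would use that $\sigma_n^*$ is a threshold rule, stopping iff $x_j$ is at most the continuation value, with continuation values Lipschitz in the history with constant $O(n)$. A perturbation of size $1/d$ then changes decisions only on a set of probability $O(n^2/d)$, on which the loss is bounded by $n$. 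This gives $w_{n,d}\le v_n+O(n^3/d)\to v_n$.
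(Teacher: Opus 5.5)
\textbf{Verdict: genuine gap.} Step~3's identity is false, so your main line does not go through, and the midpoint fallback asserts its decisive step without proof.

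\textbf{The identity in Step~3 fails, for a reason other than the one you flag.} Randomization is harmless: a randomized $\tau$ is a mixture of deterministic strategies of the finite MDP $\M_{n,d}$, so $w_{n,d}\le\E^{\tau}(\Rew)$ holds anyway. What fails is $\E^{\tau}(\Rew)=\E[L_T]$.

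Lemma~\ref{lem:correct-full-hist} says that at a \emph{fixed} time $t$ the charged loss equals $\E[L_t\mid I_1,\dots,I_t]$. That is, it averages over $U_t$ uniform in its cell, uniformly ordered among past draws in the same cell. At your stopping time you would need $\E[L_T\mid I_1,\dots,I_t,\,T=t]$ instead. The event $\{T=t\}$ is decided by the very within-cell positions $U_1,\dots,U_t$ that determine $L_t$. Since $\sigma_n^*$ stops preferentially when $U_t$ sits low in its cell, conditioning on stopping biases both the past-rank term and the future term $r\,U_t$. Independence of the future draws is true but beside the point.

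Concretely, take $n=2$ and $d=1$. Every stopping loss in $\M_{2,1}$ equals $3/2$, so $\E^{\tau}(\Rew)=3/2$, while $v_2=5/4$ (the optimal rule stops iff $x_1\le 1/2$). In general, your conclusion $w_{n,d}\le v_n$ together with Theorem~\ref{thm:approx} would force $w_{n,d}=v_n$ for all $d$. That contradicts $w_{n,1}=(n+1)/2>v_n$ for $n\ge 2$, and the $d$-dependent values reported in Section~\ref{sec:experiments}.

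\textbf{The midpoint fallback is missing its key step.} The optimal rule is not ``stop iff $x_j$ is at most the continuation value.'' It is: stop iff $\ell_j\le V_j$, where $\ell_j(x_1,\dots,x_j)=1+\#\{i<j: x_i<x_j\}+(n-j)x_j$ and $V_j$ is the optimal continuation value of the cloud $\{x_1,\dots,x_j\}$. Both sides depend on $x_j$.

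Lipschitz continuity of $V_j$ does hold, with constant $O(n^2)$: run the same future strategy from the perturbed cloud, and note the count changes only if $X_T$ falls within $1/(2d)$ of a cloud point. But this only shows that, on conflict-free paths, $\ell_j-V_j$ moves by $O(n^2/d)$ when values are replaced by midpoints. To get a decision-flip probability of $O(n^2/d)$, you also need that $\ell_j-V_j$ rarely lies near $0$. Here is how to get it:
\begin{itemize}
\item By the same coupling, $V_j$ is nonincreasing in every cloud point.
\item Hence $x_j\mapsto\ell_j-V_j$ is increasing with slope at least $n-j$.
\item So, for fixed $x_1,\dots,x_{j-1}$, the set $\{x_j:|\ell_j-V_j|\le\delta\}$ has length at most $2\delta/(n-j)$.
\end{itemize}
This also yields the threshold structure you invoke. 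You must also evaluate this canonical, pointwise-defined rule at the midpoints. An arbitrary optimal $\sigma_n^*$ is determined only up to null sets, and the midpoints form a null set.

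\textbf{Comparison with the paper.} With these additions, your coupling gives $w_{n,d}\le v_n+O(n^3/d)$, by a route different from the paper's. The paper never transfers $\sigma^\star$ to $\M_{n,d}$. Instead it builds, by backward induction, a discrete strategy that is optimal conditionally on conflict-freeness. It then compares that strategy with $\sigma^\star$ inductively over conflict-free histories, via a four-case analysis on the two decisions, and charges $(n-1)\Pr(\overline{\mathsf{CF}})$ for conflicts. That argument needs no regularity of the optimal rule. Your repaired version instead gives an explicit pathwise coupling and makes the threshold structure of the optimal rule explicit.
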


Let us briefly sketch a proof of Theorem~\ref{thm:limit-completness}. For a given $n \in \mathbb{N}$, both the exact and discretized Robbins' problems can be solved using backward induction. On complete branches of the tree of histories where no two draws fall in the same interval—let us call these cases \emph{conflict-free}—the value attributed to the branch in both processes (the exact and the discretized) is the same: the number of points preceding $x_n$ (the last draw) plus 1, and the number of occupied boxes before the box containing $x_n$ plus 1. This result generalizes to conflict-free branches where we decide to stop, if we limit the remaining draws to conflict-free draws: in that case, the expected value when we stop is close in both processes (details are given in the formal proof in appendix). In technical terms, the conditional expectation for conflict-free draws is close in both processes. Thus, both problems start with the same value in the leaves of the tree.  As $d$ tends to infinity, the measure of conflict-free branches approaches 1. This implies that as $d$ increases, the optimal losses in the two processes tend to the same value. Below, we present a formal proof of the theorem. 

\begin{proof}
\newcommand{\CF}{\mathsf{CF}}
    \textbf{Notations.}
    Let us fix $n \in \mathbb{N}$.
    Throughout this proof, for a history $hm$ (i.e., a state of the MDP $\M_n$), we write $\beta_d(h) = (r, m, \alpha_d(h))$ for the corresponding state in the MDP$\M_{n,d}$.
    For every MDP $\M$, every stationary strategy $\sigma$ in $\M$, and every state $s$ of $\M$, we write $\E(\sigma, s)$ for the expected loss of the player when she plays the strategy $\sigma$ from the state $s$ in $\M$.
    We write simply $\E(\sigma)$ when $s$ is the initial state of $\M$.
    On the other hand, we write $\E(\sigma, s, a)$ for that same expected payoff if her next action is $a$ instead of $\sigma(s)$.

    In the exact Robbins' problem MDP $\M_n$, let $\sigma^\star$ be an optimal strategy.
    In the discretized MDP $\M_{n,d}$, let $\sigma^d$ be an optimal strategy.
    By Theorem~\ref{thm:approx}, we have $\E\left(\sigma^\star\right) \leq \E\left(\sigma^d\right)$.
    In this proof, we will show that on the other hand, we have:
    $$\E\left(\sigma^d\right) \leq \E\left(\sigma^\star\right) + \left(1 - \left(\frac{d-n-1}{d}\right)^n\right) (n-1).$$
    Since the error term on the right converges to $0$ when $d$ tends to $+\infty$ (and $n$ is fixed), that will be sufficient to conclude.

    \textbf{Conflict-freeness.}
    Let us consider the case where draws are \emph{conflict-free}: for each $i$, we always have $\alpha_d(h)[i] \leq 1$.
    Conflict-freeness can be seen as an event, in the probabilistic sense of the term, in $\M_n$ as well as in $\M_{n,d}$.
    We write that event $\CF$.
    Let us however recall that in $\M_n$ as well as in $\M_{n, d}$, the player's loss when she stops on a state $s$ is defined by a formula that depends only on $s$, even though that formula has been designed, intuitively, as the expected rank of the current draw if the remaining draws are realized.
    Consequently, that formula still defines the loss when she stops on $s$ under the conflict-freeness hypothesis, which concerns only the values drawn before she stops.
    
    Let us define an intermediate strategy $\sigma$ in the discretized MDP $\M_{n,d}$.
    We define it so that it is optimal under conflict-freeness hypothesis, by backward induction:
    \begin{itemize}
        \item We define $\sigma(s) = \stop$ for every $s$ in which $n$ draws have already been done, or such that we have:
        $$\E\left(\sigma, s, \stop \mid \CF\right) \leq \E\left(\sigma, s, \neg\stop \mid \CF\right);$$

        \item and $\sigma(s) = \neg\stop$ for every other $s$.
    \end{itemize}

    Intuitively: the strategy $\sigma$ stops on the last draw, because it has to.
    At the previous draw, it stops if and only if the expected loss when stopping is better than or equal to the expected loss when not stopping.
    And so on, back to the first draw.
    By construction, the strategy $\sigma$ is optimal in $\M_{n,d}$ under the hypothesis that $\CF$ occurs; but without that hypothesis, we have $\E(\sigma^d) \leq \E(\sigma)$, by optimality of $\sigma^d$.
    To prove the result, we therefore need to prove the inequality:
    $$\E\left(\sigma\right) \leq \E\left(\sigma^\star\right) + \left(1 - \left(\frac{d-n-1}{d}\right)^n\right) (n-1).$$
    We will proceed by separating the conflict-free and the non-conflict free cases, using the equality:
    $$\E\left(\sigma\right) - \E\left(\sigma^\star\right)$$$$= \Pr(\CF) \left(\E\left(\sigma \mid \CF\right) - \E\left(\sigma^\star \mid \CF\right)\right) + \Pr(\overline \CF) \left(\E\left(\sigma \mid \overline{\CF}\right) - \E\left(\sigma^\star \mid \overline{\CF}\right)\right).$$
    
\textbf{Comparison between $\sigma$ and $\sigma^\star$.}
    We prove here that when conflict-freeness is guaranteed, the strategy $\sigma$ is actually at least as good as $\sigma^\star$.
    To do so, we prove by backward induction that $\E\left(\sigma, \beta_d(h) \mid \CF\right) \leq \E\left(\sigma^\star, h \mid \CF\right)$ for every conflict-free history $h$.

Let us first notice that, in the MDP $\M_{n,d}$, when the protagonist chooses to stop on the state $(r, m, \vv{y})$ with $\vv{y} = \alpha_d(h)$ such that $h$ is conflict-free, she immediately gets the loss:

\begin{align*}
    \E\left(\sigma, \beta_d(h), \stop \mid \CF\right) &= \sum_{i=0}^{m-1} \vv{y}[i] + \frac{\vv{y}[m] +1 }{2} + r\frac{2m+1}{2d} \\
    &= \sum_{i=0}^{m-1} \vv{y}[i] + \frac{1 + 1}{2} + r\frac{2m+1}{2d} \\
    &= \E(\sigma^\star, h, \stop \mid \CF)
\end{align*}

This result should not be a surprise: it means that when the last draw is $m$, discretization does not impact the computation of the expected rank if the player decides to stop on $m$, if all the past draws were unambiguously smaller or larger than $m$.

It is true in particular for our base case, when $|h| = n$ and $r=0$, in which case the protagonist has to stop.

    Let us now consider the inductive case, when $|h| < n$, i.e. $r > 0$.
    Let $X \subseteq [0,1)$ be the set of real numbers that can still be drawn, i.e. that are not in an interval that has already been touched.
    Let us also write $M$ for the measure of the set $X$, i.e.:
    $$M = \int_{m \in X}\mathrm{d}m = 1 - \frac{|h|}{d}.$$
    There are four cases to consider:
    \begin{itemize}
        \item If $\sigma^\star(h) = \sigma(\beta_d(h)) = \stop$, then, as exposed above, we have:
        $$\E\left(\sigma, \beta_d(h) \mid \CF\right) = \E\left(\sigma^\star, h\mid \CF\right).$$

        \item If $\sigma(\beta_d(h)) = \neg\stop$ and $\sigma^\star(h) = \stop$, then by definition of $\sigma$, continuing was better than or equivalent to stopping, under conflict-freeness hypothesis:
\begin{align*}
        \E\left(\sigma, \beta_d(h) \mid \CF\right) &\leq \E\left(\sigma, \beta_d(h), \stop \mid \CF\right) \\
        &= \E\left(\sigma^\star, h \mid \CF\right)
        \end{align*}

        \item If $\sigma(\beta_d(h)) = \stop$ and $\sigma^\star(h) = \neg\stop$, then by definition of $\sigma$, stopping was better than or equivalent to continuing, under conflict-freeness hypothesis.
        Using the induction hypothesis, we obtain:
        \begin{align*}
        \E\left(\sigma, \beta_d(h) \mid \CF\right) &\leq \E\left(\sigma, \beta_d(h), \neg\stop \mid \CF\right) \\
        &= \frac{1}{M} \int_{m \in X} \E\left(\sigma, \beta_d(h \cdot m) \mid \CF\right)\mathrm{d}m \\
        &\leq \frac{1}{M} \int_{m \in X} \E\left(\sigma^\star, h \cdot m \mid \CF\right)\mathrm{d}m \\
        &= \E\left(\sigma^\star, h\mid \CF\right)
        \end{align*}

        \item If $\sigma^\star(h) = \sigma(\beta_d(h)) = \neg\stop$, then similarly, using the induction hypothesis, we obtain:
        \begin{align*}
            \E\left(\sigma, \beta_d(h) \mid \CF\right)
            &= \frac{1}{M} \int_{m\in X} \E\left(\sigma, \beta_d(h \cdot m) \mid \CF\right) \mathrm{d}m  \\
            &\leq \frac{1}{M} \int_{m\in X} \E\left(\sigma^\star, h\cdot m \mid \CF\right)  \mathrm{d}m\\
            &= \E\left(\sigma^\star, h \mid \CF\right)
        \end{align*}

    \end{itemize}
    
    Therefore, in particular, under the hypothesis that the draws are conflict-free, we have $\E\left(\sigma \mid \CF\right) \leq \E\left(\sigma^\star \mid \CF\right)$.

\textbf{Conclusion.}
    Now, the probability of the event $\CF$ is:
    $$\Pr(\CF) = \frac{d \times (d-1) \times \dots (d-n)}{d^n} \geq \left(\frac{d-n}{n}\right)^n.$$

    When conflicts occur, we still have $\E\left(\sigma \mid \overline{\CF}\right) \leq \E\left(\sigma^\star \mid \overline{\CF}\right) + n-1$, since the possible ranks are $1, \dots, n$.
    Therefore, without conflict-freeness hypothesis, we have:
    \begin{align*}
        &\E\left(\sigma^d\right) - \E\left(\sigma^\star\right)\\
        &\leq \E\left(\sigma\right) - \E\left(\sigma^\star\right)\\
        &= \Pr(\CF) \left(\E\left(\sigma \mid \CF\right) - \E\left(\sigma^\star \mid \CF\right)\right) + \Pr(\overline \CF) \left(\E\left(\sigma \mid \overline{\CF}\right) - \E\left(\sigma^\star \mid \overline{\CF}\right)\right) \\
        &\leq 0 + \left(1 - \left(\frac{d-n-1}{d}\right)^n\right) (n-1).
    \end{align*}
    We find the inequality:
    $$\E\left(\sigma^\star\right) \leq \E\left(\sigma^d\right) \leq \E\left(\sigma^\star\right) + \left(1 - \left(\frac{d-n-1}{d}\right)^n\right) (n-1),$$
    which enables us to conclude to $\lim_{d\to \infty} \E\left(\sigma^d\right) = \E\left(\sigma^\star\right)$.
\end{proof}

\section{Remembering the $k$ best candidates only}
\label{sec:k-abs}

The abstraction in Definition~\ref{def:full-history-mdp} keeps track of the full history of the observations seen so far. The number of states in this MDP grows roughly ${\cal O}(n \cdot d^n)$, and so it limits our ability to compute the value of the MDP to small $n$ only. To be able to approximate the value $v_n$ for larger values of $n$, we consider another abstraction that consists in remembering only the counts for the $k$ smallest draws, \emph{i.e.}, the $k$ best candidates. In that case, the state space of the MDP grows only polynomially in $n$ -- the number of states is bounded by ${\cal O}(n \cdot d^k\cdot d) = {\cal O}(n \cdot d^{k+1})$. This second MDP abstraction is defined as follows.

\begin{definition}
\label{def:k-history-mdp}
For every $n,d$, and every $1 \le k \le n$, we define the \emph{MDP} $\M_{n,d,k} = (S,A,s_{in},P,\Rew)$ where
\begin{itemize}
    \item 
$S = \{(r, \last, \langle y_0, \cdots, y_{d-1} \rangle)\mid 0 \le r \le n, 0 \le \last \le d-1, y_i \ge 0, \sum_{i=0}^{d-1}{y_i} = \min(k,n-r)\} \uplus \{s_{in}, t\}$, with $t$ being a final state;
\item $A = \{\cont, \stop \}$;
\item $s_{in} = (n,\vv{0})$ is the initial state;
\item 
\begin{itemize}
    \item 
for every $s = (n-1, m, \vv{y}) \in S\setminus \{s_{in}, t\}$, such that $\vv{y}[m] = 1$ and for all $j \neq m$, $\vv{y}[j] = 0$: 
$$P(s_{in},\cont,s)=1/d;$$
\item 
for every $s = (r, \last, \vv{y}) \in S\setminus \{s_{in}, t\}$ with $n-r < k$, and for every $0 \le m \le d-1$, $$P(s,\cont,s')=1/d,$$ where $s' = (r-1, m, \vv{y'}) \in S\setminus \{s_{in}, t\}$, such that $\vv{y'}[m] = \vv{y}[m] + 1$ and for all $j \neq m$, $\vv{y'}[j] = \vv{y}[j]$;  
\item 
for every $s = (r, \last, \vv{y}) \in S\setminus \{s_{in}, t\}$ with $n-r \ge k$,  let $0 \le \maxi \le d-1$ be the \emph{maximum} index s.t. $\vv{y}[\maxi] \neq 0$. Then for every $0 \le m \le d-1$,
$$P(s,\cont,s')=1/d,$$ where
$s' = (r-1, m, \vv{y'})$ with 
$\vv{y'}[m] = \vv{y}[m] + 1
\text{ and }
\vv{y'}[\maxi] = \vv{y}[\maxi] - 1$;
\item for every $s \in S\setminus \{s_{in}, t\}$: $P(s,\stop,t)=1$;
\item all other probabilities are $0$;
\end{itemize}

\item 
for every $s = (r, m, \vv{y})  \in S \setminus \{s_{in}, t\}:$ with $0 \le \maxi \le d-1$ being the \emph{maximum} index s.t. $\vv{y}[\maxi] \neq 0$, the loss $\Rew(s, \stop)$ is defined as follows:
\begin{align}
\label{eq:exp-rank-stop-k-hist}
\begin{cases}
    \sum_{i=0}^{m-1} \vv{y}[i]+\frac{\vv{y}[m] +1 }{2} + r\cdot\frac{2m+1}{2d}
&\text{ if } m < \maxi \\
    \sum_{i=0}^{m-1} \vv{y}[i]+\frac{\vv{y}[m]+1}{2}+\max(0,\frac{n-r-k}{2(d-m)})+ r\cdot\frac{2m+1}{2d}
&\text{ if } m = \maxi\\
    1+k+\max(0,(n-r-k-1)\cdot\frac{2m-2\maxi+1}{2(d-\maxi)})+r\cdot\frac{2m+1}{2d}
    &\text{ if } m > \maxi
\end{cases}
\end{align}
\end{itemize}
\end{definition}
The \emph{objective} then is to minimize the expected total loss in $\M_{n,d,k}$.

\begin{lemma}
    For any $n, d$, the optimal loss of $\M_{n,d}$ of Definition~\ref{def:full-history-mdp} is the same as the optimal loss of $\M_{n,d,n}$ of Definition~\ref{def:k-history-mdp}.
\end{lemma}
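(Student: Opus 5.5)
The plan is to show that, when $k=n$, the two MDPs $\M_{n,d}$ and $\M_{n,d,n}$ are literally isomorphic. Their optimal losses then coincide, and no strategy-level argument is needed.

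\textbf{State spaces.} In $\M_{n,d,n}$ the constraint $\sum_i y_i = \min(k,n-r)$ becomes $\sum_i y_i = \min(n,n-r) = n-r$, since $r\ge 0$. This is exactly the constraint in Definition~\ref{def:full-history-mdp}. The initial state and the sink $t$ also agree, so the identity map $\iota$ on states is a bijection $S_{\M_{n,d}} \to S_{\M_{n,d,n}}$. The action sets are the same.

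\textbf{Transitions.} The transition from $s_{in}$ and the $\stop$ transitions to $t$ are defined by identical clauses in both definitions. For $\cont$ from a non-initial state $s=(r,\last,\vv{y})$, the clause of Definition~\ref{def:k-history-mdp} that applies depends on whether $n-r<k=n$ or $n-r\ge n$.
\begin{itemize}
\item The second case holds only when $r=0$. Then no $\cont$ transition exists in either MDP: $\M_{n,d}$ requires $r>0$, and in $\M_{n,d,n}$ the successor would need $r-1=-1$, which is not a state. I will treat this boundary point explicitly, noting that at $r=0$ only $\stop$ is admissible in both.
\item For $r>0$ we have $n-r<n$, so the first clause applies. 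It adds one to $\vv{y}[m]$ with probability $1/d$, which is exactly the rule in $\M_{n,d}$.
\end{itemize}
Hence $P$ is preserved by $\iota$.

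\textbf{Losses.} This is the step needing care. For $s=(r,m,\vv{y})$ let $\maxi$ be the largest occupied index. Since $\last=m$ is itself a recorded draw (all $n-r$ draws are kept), $\vv{y}[m]\ge 1$, so $m\le \maxi$ and the third case $m>\maxi$ of \eqref{eq:exp-rank-stop-k-hist} never occurs. If $m<\maxi$, the formula is verbatim \eqref{eq:exp-rank-stop-full-hist}. If $m=\maxi$, the extra term is $\max\bigl(0,\frac{n-r-n}{2(d-m)}\bigr)=\max\bigl(0,\frac{-r}{2(d-m)}\bigr)=0$, because $r\ge 0$ and $d-m\ge 1$. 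So the loss again equals \eqref{eq:exp-rank-stop-full-hist}. I expect this to be the main (if mild) obstacle: it requires verifying that the memory invariant $\vv{y}[\last]\ge1$ holds on reachable states, and that the truncation correction terms vanish exactly when nothing has been forgotten.

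\textbf{Conclusion.} With states, actions, transitions and losses all preserved, $\iota$ induces a bijection between strategies of the two MDPs that preserves the induced Markov chains and hence the expected losses. Alternatively, since both MDPs are finite DAGs, a backward induction over $r$ shows that the optimal values coincide state by state. Taking the initial state gives equality of the optimal losses.
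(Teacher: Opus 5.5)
Your proposal is correct, and it is exactly the direct check the paper leaves implicit (the paper states this lemma without proof). With $k=n$ one has $\min(n,n-r)=n-r$; the $\cont$ clause for $n-r<n$ is precisely the $r>0$ clause of $\M_{n,d}$; and on reachable states $\vv{y}[\last]\ge 1$ rules out the case $m>\maxi$, while the $m=\maxi$ correction $\max\bigl(0,\frac{-r}{2(d-m)}\bigr)$ vanishes.

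The only thing to tidy is your opening claim that the full MDPs are literally isomorphic. $S$ also contains unreachable states with $\vv{y}[\last]=0$ and $\last>\maxi$. On these the two losses genuinely differ: $n-r+\frac12$ versus $n+1$, plus the common term $r\frac{2m+1}{2d}$. So the identity map is an isomorphism only of the sub-MDPs reachable from $s_{in}$. That is all the optimal value depends on, and your reachability invariant already gives it.
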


We intuitively explain the loss function defined above. Here, in the first case, we have perfect information, and hence the loss is the same as in Definition~\ref{def:full-history-mdp}. In the second case, we may have forgotten some history that had fallen into the same interval as the last draw, and to capture this, we add the expression $\frac{n-r-k}{2(d-m)}$ to the loss. However, notice that, this quantity is only added when we already have stored $k$ histories, \emph{i.e.}, when $n-r > k$. Similarly, in the last case, we add $(n-r-k-1).\frac{2m-2\maxi+1}{2(d-\maxi)}$ to count for the draws that may have fallen between $\maxi$ and the current interval. Lemma~\ref{lem:correct-k-mdp} formalizes this intuition; the proof can be found in Appendix~\ref{app:lemma3-proof}.

\begin{restatable}[Correctness]{lemma}{kcorr}
\label{lem:correct-k-mdp}
    Fix $n, d$ and $1 \le k \le n$. Let $x_1, \cdots, x_{n-r}$ be the observations seen so far as in the description of Robbins' problem. Let $\vv{y}$ be the $d$-discretization of the sequence, \emph{i.e.}, $\vv{y} = \alpha_d(x_1, \cdots, x_{n-r})$ and let $x_{n-r} \in [m/d, (m+1)/d)$. Let $\M_{n,d,k}$ be the MDP as defined in Definition~\ref{def:k-history-mdp} and let $s = (r,m,\vv{y})$. Then,
    $\E(L_{n-r}|(\vv{y},m)) = \Rew(s,\stop)$.
\end{restatable}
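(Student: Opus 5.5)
We follow the decomposition used in the proof of Lemma~\ref{lem:correct-full-hist}. There, $\vv{y}$ is read as the $k$-truncated memory stored in the state of $\M_{n,d,k}$: it holds the intervals of the $\min(k,n-r)$ smallest draws, including the current one when it is among them. We split the expected rank as
\[
\E(L_{n-r}\mid(\vv{y},m)) = 1 + \sum_{j<n-r}\Pr(X_j<X_{n-r}\mid(\vv{y},m)) + \sum_{j>n-r}\Pr(X_j<X_{n-r}\mid(\vv{y},m)).
\]
The future-draws term is untouched by forgetting past draws, since future draws are independent of the past. Exactly as in Equation~\eqref{eq:corr-2}, it therefore equals $r\cdot\frac{2m+1}{2d}$ in all three cases. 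The work is in the past-draws term. There the $n-r$ past draws split into the at most $k$ remembered ones and the $\max(0,n-r-k)$ forgotten ones. The only information about a forgotten draw is that it lies at or above the largest remembered one, i.e.\ in interval $\maxi$ or above.

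\textbf{Modelling the forgotten draws.} I treat each forgotten draw, conditionally on the memory, as uniformly distributed on $[\maxi/d,1)$, of length $(d-\maxi)/d$. This independence/uniformity heuristic is what the stated formula encodes. I would state it explicitly as the conditioning convention, since the exact conditional law of order statistics is not exactly uniform.

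The three cases then go as follows.

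\textbf{Case $m<\maxi$.} No forgotten draw can lie below the current one. The remembered draws contribute $\sum_{i<m}\vv{y}[i] + \frac{\vv{y}[m]-1}{2}$ exactly as in Equation~\eqref{eq:corr-1}. This gives the first line of Equation~\eqref{eq:exp-rank-stop-k-hist}.

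\textbf{Case $m=\maxi$.} The remembered draws contribute as before. Each of the $n-r-k$ forgotten draws lands in interval $m$ with probability $\frac{1/d}{(d-m)/d}=\frac{1}{d-m}$. Given that, it is below $X_{n-r}$ with probability $1/2$ by symmetry of two uniforms in one cell. Summing gives the extra $\max(0,\frac{n-r-k}{2(d-m)})$.

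\textbf{Case $m>\maxi$.} The current draw is not among the $k$ best, so all $k$ remembered draws are smaller. One forgotten slot is the current draw itself, which leaves $n-r-k-1$ other forgotten draws. Each such draw is below $X_{n-r}$ with probability obtained by integrating over the position of $X_{n-r}$ in cell $m$:
\[
d\int_{m/d}^{(m+1)/d}\frac{u-\maxi/d}{(d-\maxi)/d}\,du=\frac{2m-2\maxi+1}{2(d-\maxi)}.
\]
Adding the leading $1$ and the $k$ remembered draws yields the third line.

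The main obstacle is justifying the uniform model of the forgotten draws. Their true conditional law given the truncated memory is that of upper order statistics, not i.i.d.\ uniforms above $\maxi$. I would first verify that conditioning on ``the $k$ smallest lie in the given cells'' makes the remaining draws i.i.d.\ uniform on $[X_{(k)},1)$. I would then argue at the level of the cell of $X_{(k)}$, approximating within cell $\maxi$ by the conditional uniform. If exactness fails there, I would present the lemma as holding under this conditioning convention, which is what the MDP transition structure implements.
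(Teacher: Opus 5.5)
\textbf{Verdict: same route as the paper, but the step you flag cannot be closed, because the identity is false there.}

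Your route is the paper's route. Its appendix proof makes the same split into $1$ plus past draws plus future draws, and handles $m<\maxi$ and the case without forgotten draws via Lemma~\ref{lem:correct-full-hist}. In the two remaining cases it multiplies the number of forgotten draws by $\Pr(X_j<X_{n-r}\mid X_{n-r}\in[m/d,(m+1)/d),\,X_j>m/d)$, resp.\ the same with $X_j>\maxi/d$. So it adopts, without comment, exactly the independent-uniform model you single out, and the step you call the main obstacle is a gap in both arguments.

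Take $k=2$, $d=2$, $n=3$, $r=0$, $\vv{y}=(2,0)$ and $x_3\in[0,1/2)$, so $m=\maxi=0$. Given this, at least one of $X_1,X_2$ lies in $[0,1/2)$. The number $W$ of them there is $2$ with probability $1/3$ and $1$ with probability $2/3$, and each such draw is below $X_3$ with probability $1/2$. Hence $\E(L_3\mid(\vv{y},m))=1+\frac12\E(W)=\frac53$, whereas the second line of \eqref{eq:exp-rank-stop-k-hist} gives $\frac32+\frac14=\frac74$.

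The third line fails too. With $k=1$, $d=2$, $n=3$, $r=0$, $\vv{y}=(1,0)$ and $x_3\in[1/2,1)$, the true value is $1+\frac13\cdot 2+\frac23\cdot\frac32=\frac83$, against the formula's $1+1+\frac34=\frac{11}{4}$.

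Your proposed repair does not rescue exactness. The remaining draws are i.i.d.\ uniform above $X_{(k)}$ only given the value of $X_{(k)}$. The state records only its cell, and averaging over that cell is exactly what creates the dependence.

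What your decomposition does yield is an inequality. Put $c=\sum_{i<\maxi}\vv{y}[i]$. Given the state, the past draws other than $X_{n-r}$ and these $c$ are $M$ i.i.d.\ uniforms on $[\maxi/d,1)$, conditioned on at least $a$ of them falling in cell $\maxi$. In the third case $a=\vv{y}[\maxi]$ and $M-a=n-r-k-1$; in the second case $a=\vv{y}[m]-1$ and $M-a=n-r-k$. The formula instead places $a$ of them in cell $\maxi$ surely and treats the other $M-a$ as unconditioned.

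Let $T$ be the index of the $a$-th of them to fall in cell $\maxi$. The draws after $T$ are genuinely unconditioned. The $T-a$ draws before $T$ that miss the cell are conditioned to avoid it. A draw in cell $\maxi$ receives the largest possible weight in the count ($1$ in the third case, $\frac12$ in the second), so these draws contribute less than the formula assigns. Hence $\E(L_{n-r}\mid(\vv{y},m))\le\Rew(s,\stop)$, with equality in particular when $m<\maxi$, when $M=a$, and when $m=\maxi$ with $\vv{y}[m]=1$.

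This inequality is what can actually be proved. A ``conditioning convention'' would instead turn the lemma into a definition.

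Note also that your reading of $\vv{y}$ as the cells of the $\min(k,n-r)$ smallest draws, which the paper's proof also uses, is not what the transition rule of Definition~\ref{def:k-history-mdp} maintains. That rule always increments $\vv{y}[m]$ and decrements $\vv{y}[\maxi]$, so $m\le\maxi$ in every reachable state and the third case never fires. Under your semantics the vector must be left unchanged when the new draw falls above cell $\maxi$.
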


\section{Remembering the $k$ best candidates in first $l$ intervals}
\label{sec:(k,l)-abs}

In the finite-state MDP that keeps track of the $k$-best draws seen so far, defined in Section~\ref{sec:k-abs}, the size grows as ${\cal O}(n \cdot d^{k+1})$, making it difficult to consider large values for $d$. Now, intuitively (as well as experimentally, cf. Table~\ref{tab:experiment}), as $n$ grows, $d$ should also be reasonably large to approximate $v_n$ sufficiently well (otherwise, the probability of collisions in histories is high and this is when we lose precision), and thus, computing the value for the MDP $\M_{n,d,k}$ may be infeasible.

However, note that if we remember the $k$ best draws, after a few rounds, it is likely that the values we have in memory are "good candidates". Let us say here that "good" means in $[0,l)$. For instance, if we remember the $k=3$ best draws, and we have seen $j$ draws taken uniformly at random, the probability that among the three best draws, one is above $l$, is bounded above by $(1-l)^{j-2}+(1-l)^{j-1}+(1-l)^{j}$. As a consequence, when $n$ is large, it is intuitively reasonable to only remember the exact interval of only the "good candidates" seen so far. Therefore, our next abstraction will consider a non-uniform partitioning of $[0,1)$, with a precise partitioning for $[0,l)$ into intervals of size $\frac{1}{d}$ as before, and only one interval for all the candidates that fall into $[l,1)$. This significantly reduces the number of states of the MDP, which is ${\cal O}(n \cdot l^{k+1})$. In Section~\ref{sec:experiments}, we will report on the values for several choices of $l$, and show that, in practice, as $n$ grows, a relatively smaller $l$ suffices for a "reasonable" approximation of the value $v_n$. The $(d,k,l)$-abstraction is formally defined as follows.

\begin{definition}
\label{def:d-k-l-abs}
For every $n,d$, every $1 \le k \le n$, and every $0 \le l \le d-1$ ,we define the \emph{MDP} $\M_{n,d,k,l} = (S,A,s_{in},P,\Rew)$ where
\begin{itemize}
    \item 
$S = \{(r, \last, \langle y_0, \cdots, y_{l-1}, y_l\rangle)\mid 0 \le r \le n, 0 \le \last \le l, y_i \ge 0, \sum\limits_{i=0}^{l}{y_i} = \min(k,n-r)\} \uplus \{s_{in}, t\}$, with $t$ being a final state;
\item $A = \{\cont, \stop \}$;
\item $s_{in} = (n,\vv{0})$ is the initial state;
\item 
\begin{itemize}
    \item 
for every $s = (n-1, m, \vv{y}) \in S\setminus \{s_{in}, t\}$, such that $\vv{y}[m] = 1$ and for all $j \neq m$, $\vv{y}[j] = 0$: 
\begin{align*}
P(s_{in},\cont,s)=
\begin{cases}
1/d &\text{ if } m < l \\
(d-l)/d &\text{ if } m = l 
\end{cases}
\end{align*}
\item 
for every $s = (r, \last, \vv{y}) \in S\setminus \{s_{in}, t\}$ with $n-r < k$, and for every $0 \le m \le l$, 
\begin{align*}
P(s,\cont,s')=
\begin{cases}
1/d &\text{ if } m < l \\
(d-l)/d &\text{ if } m = l 
\end{cases}
\end{align*}
where $s' = (r-1, m, \vv{y'}) \in S\setminus \{s_{in}, t\}$, such that $\vv{y'}[m] = \vv{y}[m] + 1$ and for all $j \neq m$, $\vv{y'}[j] = \vv{y}[j]$;
\item
for every $s = (r, \last, \vv{y}) \in S\setminus \{s_{in}, t\}$ with $n-r \ge k$,  let $0 \le \maxi \le l$ be the \emph{maximum} index s.t. $\vv{y}[\maxi] \neq 0$. Then for every $0 \le m \le l$,
\begin{align*}
P(s,\cont,s')=
\begin{cases}
1/d &\text{ if } m < l \\
(d-l)/d &\text{ if } m = l 
\end{cases}
\end{align*}
 where $s' = (r-1, m, \vv{y'})$ with 
$\vv{y'}[m] = \vv{y}[m] + 1
\text{ and }
\vv{y'}[\maxi] = \vv{y}[\maxi] - 1$;
\item for every $s \in S\setminus \{s_{in}, t\}$: $P(s,\stop,t)=1$;
\item all other probabilities are $0$;
\end{itemize}

\item
for every $s = (r, m, \vv{y})  \in S \setminus \{s_{in}, t\}:$ with $0 \le \maxi \le l$ the being \emph{maximum} index s.t. $\vv{y}[\maxi] \neq 0$, and $0 \le m \le l$, the loss $\Rew(s, \stop)$ is defined as follows:
\begin{itemize}
    \item[(i.)] if $\maxi < l$ and $m < l$, then $\Rew(s, \stop)$ is (same as Equation~\eqref{eq:exp-rank-stop-k-hist}):
\begin{align}
\label{eq:k-l-abs-rew1}
\begin{cases}
    \sum_{i=0}^{m-1} \vv{y}[i]+\frac{\vv{y}[m] +1 }{2} + r\cdot\frac{2m+1}{2d}
&\text{ if } m < \maxi \\
    \sum_{i=0}^{m-1} \vv{y}[i]+\frac{\vv{y}[m]+1}{2}+\max(0,\frac{n-r-k}{2(d-m)})+ r\cdot\frac{2m+1}{2d}
&\text{ if } m = \maxi\\
    1+k+\max(0,(n-r-k-1)\cdot\frac{2m-2\maxi+1}{2(d-\maxi)})+r\cdot\frac{2m+1}{2d}
&\text{ if } m > \maxi
\end{cases}
\end{align}
    \item [(ii.)]
    if $m = l$, then $\Rew(s, \stop)$ is:
    \begin{align}
\label{eq:k-l-abs-rew2}
\begin{cases}
(d-l)\cdot(\sum_{i=0}^{m-1} \vv{y}[i]+\frac{\vv{y}[m]+1}{2}+\max(0,\frac{n-r-k}{2(d-m)})+ r\cdot\frac{d+l}{2d})
&\text{ if } m = \maxi\\
    (d-l)\cdot(1+k+\max(0,(\text{n-r-k-}1)\cdot\frac{l+d-2.\maxi}{2(d-\maxi)})+r\cdot\frac{d+l}{2d})
&\text{ if } m > \maxi \\
\end{cases}
\end{align}
\end{itemize}
\end{itemize}
\end{definition}

The \emph{objective} then is to minimize the expected total loss in $\M_{n,d,k}$.

\begin{lemma}
    For every $n, d$ and every $1 \le k \le n$, the optimal loss of $\M_{n,d,k}$ of Definition~\ref{def:k-history-mdp} is the same as the optimal loss of $\M_{n,d,k,d-1}$ of Definition~\ref{def:d-k-l-abs}.
\end{lemma}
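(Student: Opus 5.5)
The plan is to show that, once $l=d-1$ is fixed, the two MDPs $\M_{n,d,k}$ and $\M_{n,d,k,d-1}$ are \emph{syntactically identical up to the identity renaming of states}. Equal optimal losses then follow at once. I will compare the components of Definition~\ref{def:k-history-mdp} and Definition~\ref{def:d-k-l-abs} one at a time.

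\textbf{States and actions.} With $l=d-1$, a state of $\M_{n,d,k,d-1}$ is $(r,\last,\langle y_0,\dots,y_{l-1},y_l\rangle)$ with $0\le \last\le d-1$, a vector of length $l+1=d$, and $\sum_i y_i=\min(k,n-r)$. This is exactly the state space of $\M_{n,d,k}$, and both MDPs have the same $s_{in}$, the same $t$ and the same action set. The key observation is that the coarse interval $[l/d,1)$ is now $[(d-1)/d,1)$, which is just the last interval of length $\frac1d$ in the uniform partition. So the non-uniform partition of Section~\ref{sec:(k,l)-abs} degenerates to the uniform one.

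\textbf{Transitions.} In Definition~\ref{def:d-k-l-abs}, every $\cont$ transition to an index $m<l$ has probability $\frac1d$, and the transition to index $m=l$ has probability $\frac{d-l}{d}=\frac1d$. All other data coincide: the target vector $\vv{y'}$, the removal of one unit at $\maxi$ when $n-r\ge k$, and the $\stop$ transitions to $t$. So $P$ is identical in the two MDPs.

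\textbf{Losses.} For $m<l$ and $\maxi<l$, equation~\eqref{eq:k-l-abs-rew1} is literally equation~\eqref{eq:exp-rank-stop-k-hist}. For $m=l=d-1$, I substitute $d-l=1$ into~\eqref{eq:k-l-abs-rew2}. This gives
\[
r\cdot\frac{d+l}{2d}=r\cdot\frac{2(d-1)+1}{2d}=r\cdot\frac{2m+1}{2d}
\quad\text{and}\quad
\frac{l+d-2\maxi}{2(d-\maxi)}=\frac{2m-2\maxi+1}{2(d-\maxi)}.
\]
So both subcases of~\eqref{eq:k-l-abs-rew2} reduce to the corresponding subcases ($m=\maxi$ and $m>\maxi$) of~\eqref{eq:exp-rank-stop-k-hist}.

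The one point needing care is the configuration $m<l$ with $\maxi=l$, which item (i) does not literally list. I would read the definition there as the generic formula: in that configuration $m<\maxi$, so the first line of~\eqref{eq:k-l-abs-rew1} applies, and this again matches~\eqref{eq:exp-rank-stop-k-hist}. I expect this bookkeeping of cases to be the only real obstacle, since the arithmetic itself is routine.

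\textbf{Conclusion.} With states, transitions and losses identical, the two MDPs are the same object. Hence they have the same set of strategies and the same induced Markov chains, so the expected loss of every strategy is the same in both. In particular, the backward-induction values agree state by state, by a trivial induction on $r$ from the final states upward. Therefore the optimal losses coincide.
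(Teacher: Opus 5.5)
Your proposal is correct. The paper states this lemma without a separate proof, treating it as immediate from the definitions, and your component-by-component check is exactly that intended argument: with $l=d-1$ the coarse cell $[l/d,1)$ is an ordinary cell of width $\frac{1}{d}$, so $\frac{d-l}{d}=\frac1d$, the outer factor $d-l$ equals $1$, and $\frac{d+l}{2d}$ and $\frac{l+d-2\maxi}{2(d-\maxi)}$ become the $m=d-1$ instances of the formulas in Definition~\ref{def:k-history-mdp}. Your treatment of the edge cases is also sound: filling the unlisted case $m<l$, $\maxi=l$ with the first line of~\eqref{eq:k-l-abs-rew1} is the natural completion of the definition, and since $\maxi\le l=m$ whenever $m=l$, the two subcases of~\eqref{eq:k-l-abs-rew2} are exhaustive.
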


We can prove the correctness of loss function as in the previous abstractions:
\begin{lemma}[Correctness]
\label{lem:correct-d-k-l-abs}
    Fix $n, d, 1 \le k \le n$, and $0 \le l \le d-1$. Let $x_1, \cdots, x_{n-r}$ be the observations seen so far as in the description of Robbins' problem. Let $\vv{y}$ be the $d$-discretization of the sequence, \emph{i.e.}, $\vv{y} = \alpha_d(x_1, \cdots, x_{n-r})$ and let $x_{n-r} \in [m/d, (m+1)/d)$. 
    Let $\M_{n,d,k,l}$ be the MDP as in Definition~\ref{def:d-k-l-abs} and let $s = (r,m,\vv{y})$.
    Then,
    $\E(L_{n-r}|(\vv{y},m)) = \Rew(s,\stop)$.
\end{lemma}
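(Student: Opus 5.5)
The plan is to reduce Lemma~\ref{lem:correct-d-k-l-abs} to Lemma~\ref{lem:correct-k-mdp}. I would split on whether the last draw and the memory lie in the fine part $[0,l/d)$ or in the merged block $[l/d,1)$.

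\textbf{Case (i): $\maxi<l$ and $m<l$.} Every remembered draw and the current draw lie in genuine intervals of length $1/d$. The state $(r,m,\vv{y})$ of $\M_{n,d,k,l}$ carries exactly the information of the corresponding state of $\M_{n,d,k}$, because the coordinate $y_l$ is $0$ at every index that matters. The formula~\eqref{eq:k-l-abs-rew1} is verbatim~\eqref{eq:exp-rank-stop-k-hist}. The conditional law of the unseen draws (uniform on $[0,1)$, independent) does not depend on how $[l/d,1)$ is partitioned. Hence $\E(L_{n-r}\mid(\vv{y},m))=\Rew(s,\stop)$ follows directly from Lemma~\ref{lem:correct-k-mdp}. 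As in Lemma~\ref{lem:correct-full-hist}, I would decompose the expected rank into $1$, plus the past draws below $x_{n-r}$, plus the future draws below $x_{n-r}$. Each of these three contributions is identical in the two MDPs.

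\textbf{Case (ii): $m=l$.} Here the abstract state records only that $x_{n-r}\in[l/d,1)$, a block made of the $d-l$ fine cells $[m'/d,(m'+1)/d)$ with $l\le m'\le d-1$. I would read $\E(L_{n-r}\mid(\vv{y},m))$ for this merged state consistently with the transition weight $(d-l)/d$: it is the aggregate over the $d-l$ fine cells. Concretely, I would prove
\[
\Rew(s,\stop)=\sum_{m'=l}^{d-1}\E\bigl(L_{n-r}\mid(\vv{y},m')\bigr).
\]
Each summand on the right is given by Lemma~\ref{lem:correct-k-mdp}, applied with $m'$ in place of $m$.

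The key computation is that the per-cell expressions of~\eqref{eq:exp-rank-stop-k-hist} are affine in $m'$. A sum of $d-l$ affine terms equals $(d-l)$ times the value at the average index $\bar m=\frac{l+d-1}{2}$. Two substitutions then produce~\eqref{eq:k-l-abs-rew2}:
\begin{itemize}
\item the future-draw term becomes $r\cdot\frac{2\bar m+1}{2d}=r\cdot\frac{d+l}{2d}$;
\item in the subcase $m>\maxi$, the term $\frac{2m'-2\maxi+1}{2(d-\maxi)}$ becomes $\frac{l+d-2\maxi}{2(d-\maxi)}$.
\end{itemize}
As a sanity check, I would confirm directly that $r\cdot\frac{d+l}{2d}$ is $r$ times the probability that a fresh uniform draw lies below a point uniform on $[l/d,1)$. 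The memory terms $\sum_{i<l}\vv{y}[i]$ and $1+k$ do not depend on $m'$, so they simply pick up the factor $d-l$.

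\textbf{Main obstacle.} The hardest step is the subcase $m=\maxi=l$. The forgotten-draws correction $\frac{n-r-k}{2(d-m')}$ is not affine in $m'$, and the count $\vv{y}[l]$ pools draws from several fine cells. I would first argue that the terms $\frac{\vv{y}[m]+1}{2}$ and $\frac{n-r-k}{2(d-m)}$, evaluated at $m=l$, are the correct block-level quantities: given that two draws lie in the same block, each is equally likely to be the smaller, and the forgotten draws relevant to the block are spread over it. If a cell-by-cell sum does not reproduce this exactly, I would instead compute the expectation directly at the block level, with the current draw uniform on $[l/d,1)$. I would then check that this direct computation yields exactly the bracket in~\eqref{eq:k-l-abs-rew2}, multiplied by $d-l$.
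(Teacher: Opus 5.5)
Your case (i) is fine and is the route the paper intends; the paper gives no separate proof, only ``as in the previous abstractions''. When $\vv{y}[l]=0$ and $m<l$, both the conditioning event and the formula coincide with those of Lemma~\ref{lem:correct-k-mdp}.

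\textbf{The gap is case (ii).} The identity you set out to prove, $\Rew(s,\stop)=\sum_{m'=l}^{d-1}\E(L_{n-r}\mid(\vv{y},m'))$, is not the lemma. The left-hand side of the lemma is the conditional expectation of a rank given that the current draw lies in $[l/d,1)$. Given the memory, that draw is uniform on the block, so the left-hand side is the \emph{average} $\frac{1}{d-l}\sum_{m'=l}^{d-1}\E(L_{n-r}\mid(\vv{y},m'))$, a number in $[1,n]$. Consistency with the transition weight $(d-l)/d$ points the other way: the block state is already entered with that probability, so attaching the sum to it counts the block $d-l$ times over. Your averaging computation for $m>\maxi$ is correct, and it therefore shows that the \emph{bracket} in \eqref{eq:k-l-abs-rew2} is the conditional expectation. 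It follows that $\Rew(s,\stop)$ as printed is $d-l$ times too large. Concretely, take $n=k=1$, $r=0$ and $l\le d-2$: the bracket equals $1$, so $\Rew(s,\stop)=d-l\geq 2$, while $L_1\equiv 1$. So for $m=l$ the statement cannot be proved as written. The correct conclusion is that the factor $d-l$ has to go, not that the lemma holds under an aggregate reading of $\E$.

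\textbf{The subcase $m=\maxi=l$ has a second, independent problem,} and your plan for it would not go through. The cell-by-cell sum is not even available there, because the state does not record how $\vv{y}[l]$ splits over the fine cells of the block. Moreover, the block-level facts you intend to argue are false.
\begin{itemize}
\item Forgotten draws: under the modelling used in the proof of Lemma~\ref{lem:correct-k-mdp}, a forgotten draw is uniform above $m/d$, here on $[l/d,1)$. It then lies below a current draw uniform on the same block with probability $\frac12$, so the forgotten draws contribute $\frac{n-r-k}{2}$. The factor $\frac{1}{2(d-m)}$ is the probability of lying below \emph{inside one cell of width $1/d$}; it does not transfer to a block of width $(d-l)/d$.
\item Remembered draws: those in the block are the smallest of the earlier draws, hence skewed towards $l/d$. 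They are not exchangeable with the current draw, so the ``one half each'' behind $\frac{\vv{y}[l]+1}{2}$ is not exact on a wide block.
\end{itemize}
Consequently your planned check against $d-l$ times the bracket cannot succeed. A direct computation gives $\sum_{i<l}\vv{y}[i]$ and $r\frac{d+l}{2d}$ without the factor $d-l$. It also gives a forgotten-draw term $\frac{n-r-k}{2}$, which the bracket alone does not contain. So in this subcase neither \eqref{eq:k-l-abs-rew2} nor its rescaled version is the conditional expected rank. Finally, your split, like Definition~\ref{def:d-k-l-abs}, leaves the case $m<l=\maxi$ uncovered.
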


\section{Experimental results}
\label{sec:experiments}

Our experimental results have been obtained using an implementation of the backward induction algorithm in {\sc Python3}\footnote{The source codes related to our algorithm can be found at \url{https://github.com/anirban11/Robbins_MDP}}.
In principle, we could have modelled our abstractions directly in tools like {\sc Prism}~\cite{prism} or {\sc Storm}~\cite{storm2017}, but the underlying MDPs are too large. Indeed, as we will report below, some of the MDPs we analysed have size (number of states plus number of edges) more than $10^{12}$. Consequently, the running time for some of the largest MDPs in our ad hoc implementation ranges from several hours to a few days.

\subsection{Experiments on $d$-abstractions of Section~\ref{sec:d-abs}}
We have been able to achieve almost optimal values for small values of $n$ with appropriate choices of $d$. For instance, it was shown in~\cite{assaf1996secretary}, that the optimal value for $n = 3$ is $v_3 = 1.3915\cdots$; and with $d = 500$ and $d = 1000$, we have achieved the values $v_{3,500} = 1.391635988$ and $v_{3,1000} = 1.391593893$. Similarly, in~\cite{n=4}, the authors show that $v_4 = 1.4932\cdots$ (the method being reasonably complex); and we could achieve close values with, for example, $d= 500$ and $d = 1000$: $v_{4,500} = 1.493418067584$ and $v_{4,1000} = 1.493356615167$.

We conclude that our first abstraction is capable of computing (near) optimal rank for any $n$ (number of draws), with appropriate choices for $d$. However, as noted earlier,  the size of this MDP is roughly ${\cal O}(n\cdot d^n)$, which is a concern, since for larger $n$, the choice of $d$ also should be `reasonably' large (depending on $n$) for obtaining a `reasonable' approximation of the optimal rank.
It becomes almost infeasible to compute the value of this abstraction for $n \ge 6$ , even with precision $d=100$. In the following, we will see that our second abstraction is more tractable as well as can achieve `good' approximations for larger $n$.

\subsection{Experiments on $(d,k)$-abstractions of Section~\ref{sec:k-abs}}

	\begin{figure}[t]
	    \centering
	\begin{tikzpicture}
		\pgfplotsset{%
			width=1\textwidth,
			height=0.6\textwidth,
   legend style = {at={(0.03,0.8)}, anchor = west}
		}
		\pgfplotstableread{
			{$n$} {memoryless} {$f(n,d=100,k=1)$} {$f(n,d=100,k=2)$} {$f(n,d=100,k=3)$}		
1 1.0 1 1 1
2 1.275811749652425 1.25 1.25 1.25
3 1.4406553537397055 1.3937410297446928 1.3919754999999998 1.3919754999999998
4 1.554231487592793 1.4980773499999995 1.4952335448304033 1.49395909
5 1.6383330315011604 1.5795054408434996 1.5734700490994833 1.5721921488824606
6 1.7035992657834345 1.6444245663376638 1.6358619499403502 1.6342362984803323
7 1.755981689668285 1.6974387879871065 1.6877894583822046 1.685101537457593
8 1.7991101490119972 1.7421584996749382 1.7315605800759681 1.7276892448804642
9 1.8353399442411704 1.78038030836292 1.7690237443010937 1.764445000811847
10 1.8662733112166188 1.813875528063636 1.8013254918718997 1.796538612928518
11 1.8930418652242904 1.8431356999620616 1.8297817551886348 1.8246501557474404
12 1.9164696896925588 1.8691270817016201 1.8550394330846165 1.8496203931498636
13 1.9371724315418803 1.892395646472228 1.877658946759951 1.872054518892301
14 1.955620083731664 1.9137717970997081 1.8977472994465094 1.8919120889360712
15 1.9721781798256033 1.9326420419266084 1.9162765482680093 1.9099663016179091
16 1.9871356427648257 1.9499738594934442 1.9331822317861294 1.9265730076497491
17 2.000724103173354 1.966190259446156 1.948711934623551 1.9418095400830386
18 2.0131316053179527 1.981105091777348 1.9628701530999633 1.9557362582774984
19 2.0245125261168546 1.9946405523612243 1.9760182987136068 1.9688293510947446
20 2.0349948812329974 2.007183109135607 1.9882264479394598 1.98068709413836
21 2.0446857921469492 2.019166566076766 1.999598539145068 1.991764623401514
22 2.0536756356449932 2.030373747812136 2.010320306208866 2.002311699407846
23 2.062041234034737 2.0407173263949723 2.020492045374007 2.0122038667799553
24 2.069848336720526 2.0504748799336916 2.0297647718790834 2.021338458645636
25 2.0771535713006974 2.0596020770574235 2.0384359020983833 2.029912255132422
26 2.0840059927096384 2.068350360925118 2.046745839201283 2.038065694244965
27 2.0904483243764487 2.0765323809789775 2.0547954781536433 2.0459936472399267
28 2.096517960965898 2.0842570627155808 2.0624857989241203 2.0533472446013734
29 2.102247784792144 2.091923487641113 2.0696130629442555 2.0602216687103643
30 2.107666835325425 2.0991683008892688 2.076211792495852 2.066722623107619
		}\mytable

    \begin{axis}[
			xmin=1, xmax=30,
	ymin=1, ymax=2.5,
	ylabel={Value},
	ylabel style={align=center},
	xlabel={Number of draws},
	mark=none
	]
	
	\addplot [black, mark = none] table[x index=0,y index=1] {\mytable};
	\addlegendentry{memoryless}
	\addplot [red, mark = none] table[x index=0,y index=2] {\mytable};
	\addlegendentry{$k=1$}
	\addplot [blue, mark = none] table[x index=0,y index=3] {\mytable};
	\addlegendentry{$k=2$}
	\addplot [green, mark = none] table[x index=0,y index=4] {\mytable};
	\addlegendentry{$k=3$}
    \end{axis}
    \end{tikzpicture}
    \caption{Values for $1 \le n\le 30$ with the memoryless strategy of~\cite{assaf1996secretary} with the parameter $c=1.9469$; and for our $(d,k)$-abstraction -- with the parameters $d=100$ and $k = 1,2,3$ -- are presented.}
    \label{fig:experiment}
\end{figure}

Consider the case $n=4$ as discussed in the previous subsection. We observe that the optimal value for $n=4$ is less than $2$. This intuitively implies we can get a `good' approximation even when we only store the $2$ best draws seen so far. For instance, $v_{4,500,2} = 1.49433$ and $v_{4,1000,2} = 1.49423$. 
Based on this intuition, using $(d,k)$-abstraction, with smaller values of $k$ (e.g., $k \in \{2,3\}$), we can achieve values that are close to the optimal expected ranks for larger values of $n$. In the following, we first compare the optimal values for $n \le 30$ with different values of $k$ (w.r.t. $d = 100$) (cf. Figure~\ref{fig:experiment}), and second, we report on the values for larger $n$ with $k \in \{2,3\}$ (cf. Table~\ref{tab:experiment}) and will see that, in most of the cases (for $n\le 100$), we can achieve better values than the optimal values obtained by the memoryless strategy of~\cite{assaf1996secretary}. 

Having said the above, Figure~\ref{fig:experiment} shows a graphical comparison of the minimal expected ranks computed by the memoryless strategy of~\cite{assaf1996secretary}, with the $(d,k)$-abstraction, defined in Section~\ref{sec:k-abs}, for the number of draws $n \le 30$, with the precision $d =100$, and for $k \in \{1,2,3\}$. Notice that, with this precision, we always achieve a better value than the memoryless one of~\cite{assaf1996secretary}, for all $k \in \{1,2,3\}$.

As mentioned earlier, we would like to clarify that the strategies considered in~\cite{assaf1996secretary} are not truly "memoryless" in the classical sense. Rather, they are "counting" strategies, meaning they store the information about how many draws remain in the state space, denoted as $r$. Additionally, from this information, they infer that the $i^{th}$ draw (for all $i \le n-r$) was in the interval $[\varphi(n,i),1)$, where $\varphi$ is the threshold function defined in Section~\ref{sec:intro}.

\begin{table}[t]
\caption{Minimal expected rank for several choices of $n$, for $d = 100,500$ with $k = 2,3$; and for $d = 1000$ with $k = 2$ are presented. Additionally, we also report on the values achieved by the memoryless strategy of~\cite{assaf1996secretary} to highlight the fact that we get better values with our abstractions for most $n$, with appropriate $d, k$.}
\begin{center}
\begin{tabular}{||c||c|c||c|c||c||c||c||}
    \hline
     \multirow{2}{*}{$n$}& \multicolumn{2}{c||}{$d=100$} & \multicolumn{2}{c||}{$d=500$} & {$d=1000$} & \multirow{2}{*}{$ml$~\cite{assaf1996secretary}} \\
     \cline{2-6}
      &$k=2$ & $k=3$ &$k=2$ & $k=3$  &$k=2$ & \\
    \hline
    $5$ & 1.57347 & 1.57219 & 1.57231 & 1.57128 & 1.57217 &1.63833 \\
    \hline
    $10$ & 1.80133 & 1.79654 & 1.79912 & 1.79463 & 1.79886 &1.86627 \\
    \hline
    $50$ & 2.16781 & 2.15594 & 2.15293 & 2.14325 & 2.15155 &2.17663 \\
    \hline
    $100$ & 2.26683 & 2.25158 & 2.23382 & 2.22249 & 2.23077 &2.23940 \\
    \hline
    $500$ & 3.10808 & 3.10754 & 2.34799 & 2.33137 & 2.32697 &2.30574 \\
    \hline
\end{tabular}
\end{center}
\label{tab:experiment}
\end{table}

In Table~\ref{tab:experiment}, we present values given by our $(d,k)$-abstraction for different $n$'s and again compare with~\cite{assaf1996secretary}. For all $n\le 100$, we could achieve better values than~\cite{assaf1996secretary} with an appropriate choice of $d$ (e.g., $d=100,500$ or $1000$) and even with relatively small $k$ (e.g., $k \in \{2,3\}$). For $n=500$, with $d=1000,~ k=2$, we achieve a value close to that of~\cite{assaf1996secretary}. Notice that, as described earlier, as $n$ grows, $d$ should also be larger - for instance, for $n=500$, $d = 100$ is of not much help.
All values for $n = 1$ to $100$ with $d = 500$, $1000$ and $k = 2$, and for the memoryless strategy of~\cite{assaf1996secretary} are reported in Appendix~\ref{app:experiment}. From those values, we conclude that we achieve better values for all $n \le 100$ than the memoryless strategy of~\cite{assaf1996secretary}.
We would also like to mention that, for $d = 1000$ with $k = 3$, our system ran out of memory.

As mentioned earlier, some of the MDPs above are very large. For instance, the MDP $\M_{100,500,3}$ has size\footnote{The size here refers to the total number of states and transitions in the MDP.} roughly $6\times 10^{12}$, and the time taken to compute the value for this set of parameters was $1.6 \times 10^5$ seconds ($\approx 45$ hours). To take another example, the MDP $\M_{500,1000,2}$ has size around $5 \times 10^{11}$, and the computation time was approx. $1.3 \times 10^5$ seconds. This showcases the efficiency of our implementation.

\subsection{Experiments on $(d,k,l)$-abstractions of Section~\ref{sec:(k,l)-abs}}
The results of this section demonstrate, as described earlier, that for large $n$'s, all $k$-best draws fall in the interval $[0,l)$ for reasonable choices of $l$, e.g., $l = d/10, d/5$ etc. with high probability. Therefore, while keeping track of the best draws in $[0,l)$ gives a good estimate of the optimal rank, on the other hand, it is much faster than our previous abstraction.
Let us give an example. Consider $n = 500, k=2$. With $d = 500$,  the value is $v_{500,500,2} = 2.34799$ which took around $1.9 \times 10^4$ seconds, whereas $v_{500,2000,2,300} = 2.32791$, which is a better value than the previous, and the computation took only $2\times 10^3$ seconds. The latter is still a `good' over-approximation of $v_{500,1000,2}=2.32697$ which took $1.3 \times 10^5$ seconds to compute. 
With $k=3$, the computation of $v_{500,500,3}$ takes up to $11$ days.

\section{Conclusion}
\label{sec:conclusion}

In this paper, we have studied several abstractions for  Robbins' problem, modelling them as Markov Decision Processes. These abstractions have enabled us to compute approximate values for instances of  Robbins' problem where no good approximations were previously known, specifically for all values between $n = 5$ and $n = 100$ for which we systematically obtain values that are better than the best previously known upper-bounds.

More importantly, our findings demonstrate that the memory of past draws can be concentrated on the few best draws while forgetting the less favourable ones. Specifically, when we remember the two or three best previous draws, our strategies are almost optimal. This indicates that even though the optimal strategy theoretically depends on the entire history, the optimal value may be effectively approximated with a relatively small number of draws seen so far.

\paragraph*{Possible Future Works.}
First, it could be interesting to explore the problem of computing lower bounds numerically using MDPs, as this paper has primarily focused on methods for obtaining upper bounds. 
Second, another interesting question is to know whether our discretization approach may also give a new access to approximate the solution of a differential equation derived in~\cite{bruss2009}.
 In that article,  $n$ is thought to be the realization of a random variable $N_n$
 which is the count of i.i.d.  events in a Poisson process of rate $1$ on an interval $[0,n]$, $ n= 1, 2, \cdots$.   Hence, $\E(N)/N_n \to 1$ almost surely, and thus,  it
is somewhat intuitive that the corresponding optimal value should coincide in the limit with our searched value $v$ of the original Robbins’ problem.
This is indeed true, however, we should mention that the proof is more complicated than one would think,
and the complete proof is given in~\cite{bruss2009}.
With this result, we are able to argue with increments of relative ranks on intervals of the form $[t, t+dt)$ and to express the limiting value $v$ as the solution
 of a differential equation in two unknown functions, one of which  is shown to tend to $0$ as $t=n$ tends to $\infty.$
The better we can estimate the latter, the closer we come to $v.$ The difficulty of full history dependence therefore does  not disappear, but this setting allows for an alternative closed-form approach.

  \paragraph*{Acknowledgements.} We would like to thank Purba Das, lecturer in the Department of Mathematics at King’s College London, for helpful discussions. We would also like to thank the anonymous reviewers of this paper for their valuable feedback and suggestions.


%
\bibliographystyle{splncs04}
\bibliography{main}

@book{DBLP:books/daglib/0020348,
  author       = {Christel Baier and
                  Joost{-}Pieter Katoen},
  title        = {Principles of model checking},
  publisher    = {{MIT} Press},
  year         = {2008}
}

@book{DBLP:books/wi/Puterman94,
  author       = {Martin L. Puterman},
  title        = {Markov Decision Processes: Discrete Stochastic Dynamic Programming},
  series       = {Wiley Series in Probability and Statistics},
  publisher    = {Wiley},
  year         = {1994}
}

@article{bruss1993minimizing,
  title={Minimizing the expected rank with full information},
  author={Bruss, F Thomas and Ferguson, Thomas S},
  journal={Journal of Applied Probability},
  volume={30},
  number={3},
  pages={616--626},
  year={1993},
  publisher={Cambridge University Press}
}

@article{assaf1996secretary,
  title={The secretary problem: minimizing the expected rank with iid random variables},
  author={Assaf, David and Samuel-Cahn, Ester},
  journal={Advances in Applied Probability},
  volume={28},
  number={3},
  pages={828--852},
  year={1996},
  publisher={Cambridge University Press}
}

@article{n=4,
  title={One step more in {R}obbins' problem: Explicit solution for the case n = 4},
  author={Dendievel, R{\'e}mi and Swan, Y},
  journal={Mathematica Applicanda},
  volume={44},
  number={1},
  pages={135--148},
  year={2016}
}

@article{chow1964,
  title={Optimal selection based on relative rank (the “secretary problem”)},
  author={Chow, Yuan-Shih and Moriguti, Sigaiti and Robbins, Herbert and Samuels, Stephen Mitchell},
  journal={Israel Journal of {M}athematics},
  volume={2},
  number={2},
  pages={81--90},
  year={1964},
  publisher={Springer}
}

@article{bruss2005,
  title={What is known about {R}obbins' problem?},
  author={Bruss, F Thomas},
  journal={Journal of Applied Probability},
  volume={42},
  number={1},
  pages={108--120},
  year={2005},
  publisher={Cambridge University Press}
}

@inproceedings{storm2017,
  author       = {Christian Dehnert and
                  Sebastian Junges and
                  Joost{-}Pieter Katoen and
                  Matthias Volk},
  title        = {A Storm is Coming: {A} Modern Probabilistic Model Checker},
  booktitle    = {{CAV} {(2)}},
  series       = {Lecture Notes in Computer Science},
  volume       = {10427},
  pages        = {592--600},
  publisher    = {Springer},
  year         = {2017}
}

@inproceedings{prism,
  title={PRISM: Probabilistic symbolic model checker},
  author={Kwiatkowska, Marta and Norman, Gethin and Parker, David},
  booktitle={International Conference on Modelling Techniques and Tools for Computer Performance Evaluation},
  pages={200--204},
  year={2002},
  organization={Springer}
}

@article{gilbert1966,
  title={Recognizing the maximum of a sequence},
  author={Gilbert, John P and Mosteller, Frederick},
  journal={Journal of the American Statistical Association},
  volume={61},
  number={313},
  pages={35--73},
  year={1966},
  publisher={Taylor \& Francis}
}

@article{lindley1961,
  title={Dynamic programming and decision theory},
  author={Lindley, Denis V},
  journal={Journal of the Royal Statistical Society: Series C (Applied Statistics)},
  volume={10},
  number={1},
  pages={39--51},
  year={1961},
  publisher={Wiley Online Library}
}

@inproceedings{bruss1996,
  title={Half-prophets and {R}obbins’ problem of minimizing the expected rank},
  author={Bruss, F Thomas and Ferguson, Thomas S},
  booktitle={Athens Conference on Applied Probability and Time Series Analysis. Lecture Notes in Statistics, Volume 114},
  pages={1--17},
  year={1996},
  organization={Springer}
}

@article{meier2017,
  title={A new strategy for {R}obbins’ problem of optimal stopping},
  author={Meier, Martin and S{\"o}gner, Leopold},
  journal={Journal of Applied Probability},
  volume={54},
  number={1},
  pages={331--336},
  year={2017},
  publisher={Cambridge University Press}
}

@article{gnedin2007,
  title={Optimal stopping with rank-dependent loss},
  author={Gnedin, Alexander V},
  journal={Journal of Applied Probability},
  volume={44},
  number={4},
  pages={996--1011},
  year={2007},
  publisher={Cambridge University Press}
}

@article{bruss2009,
  title={A continuous-time approach to {R}obbins' problem of minimizing the expected rank},
  author={Bruss, F Thomas and Swan, Yvik C},
  journal={Journal of Applied Probability},
  volume={46},
  number={1},
  pages={1--18},
  year={2009},
  publisher={Cambridge University Press}
}

@article{bruss2024,
  title={Mathematical Intuition, Deep Learning, and {R}obbins’ Problem},
  author={Bruss, F Thomas},
  journal={Jahresbericht der Deutschen Mathematiker-Vereinigung},
  pages={1--25},
  year={2024},
  publisher={Springer}
}

\newpage
\appendix

\section{Appendix}
\label{sec:appendix}

\subsection{Proof of Lemma~\ref{lem:correct-k-mdp}}
\label{app:lemma3-proof}
\kcorr*

\begin{proof}
    \begin{align}
    \label{eq:correctness-1-k-mdp}
        \E(L_{n-r}|(\vv{y},m)) = 1 &+ \sum\limits_{j<{n-r}}\Pr(X_j<X_{n-r}\mid (\vv{y},m)) \nonumber \\
        &+ \sum\limits_{j>{n-r}}\Pr(X_j<X_{n-r}\mid (\vv{y},m))
    \end{align}
    Let $0 \le \maxi \le d-1$ be the \emph{maximum} index s.t. $\vv{y}[\maxi] \neq 0$. Then, if $m < \maxi$, we have perfect (w.r.t. $d$-discretization) history, and thus the proof is the same as for Lemma~\ref{lem:correct-full-hist}.
    It is also the same when ${n-r}<k$.
    
    When ${n-r} > k$, there are two possibilities:
    \begin{itemize}
        \item [(i.)]
    if $m= \maxi$, we may have forgotten some histories (w.r.t. $d$-discretization). Then, for $j < {n-r}$,
      \begin{align}
      \label{eq:correctness-2-k-mdp}
      &\Pr(X_j<X_{n-r}\mid (\vv{y},m))  \nonumber\\ & = 
      \sum_{i=0}^{m-1} \vv{y}[i] + \frac{\vv{y}[m] - 1 }{2}\nonumber\\
      & + (n-r-k)\cdot
      \Pr(X_j<X_{n-r}\mid X_{n-r} \in [m/d, (m+1)/d); X_j > m/d) 
      \end{align}

      The last probability above evaluates to:
      
      \begin{align}
      \label{eq:correctness-3-k-mdp}
      &=\int\limits_{m/d}^{(m+1)/d}\Pr(X_j<u \mid X_j \in [m/d, (m+1)/d))\cdot \frac{1}{1/d}\nonumber\\
      & =d\cdot\int\limits_{m/d}^{(m+1)/d} \frac{u-m/d}{1-m/d} du\nonumber\\
      & = \frac{1}{2(d-m)}.
      \end{align}

    \item [(ii.)]  If $m>\maxi$, again, we may have forgotten some histories. For $j<{n-r}$,
      \begin{align}
      \label{eq:correctness-4-k-mdp}
      &\Pr(X_j<X_{n-r}\mid (\vv{y},m)) \nonumber\\
      &= k+ ({n-r-k-}1)\cdot
      \Pr(X_j<X_{n-r}\mid X_{n-r} \in [m/d, (m+1)/d); X_j > \maxi/d) 
      \end{align}
      Similar to the previous case, 
      the last probability above evaluates to $\frac{2m-2\maxi+1}{2(d-\maxi)}$.
    \end{itemize}
    Lastly, for both (i.) and (ii.), one can show that for $j>{n-r}$, $$\Pr(X_j<X_{n-r}\mid (\vv{y},m)) = \frac{2m+1}{2d},$$ similar to the proof of Lemma~\ref{lem:correct-full-hist}.
    
    This concludes the proof of Lemma~\ref{lem:correct-k-mdp}.
\end{proof}

\subsection{Experimental Results}
\label{app:experiment}
Below, in Table~\ref{tab:full-table-1} and~\ref{tab:full-table-2}, we present the values for $n = 1$ to $100$ with $d = 500$, $1000$ and $k = 2$, and for the memoryless strategy of~\cite{assaf1996secretary} with the parameter $c=1.9469$.

\begin{table}[t]
\caption{Experimental results for $1 \le n \le 50$.}
\centering
\begin{tabular}{|c|c|c|c|}
\hline
$n$ & $d=500$, $k=2$ & $d=1000$, $k=2$ & $ml$~\cite{assaf1996secretary} \\
\hline
 1 & 1.0 & 1.0 & 1.0\\
\hline
 2 & 1.2499999999999998 & 1.2500000000000002 & 1.275811749652425\\
\hline
 3 & 1.3916359880000004 & 1.3915938930000002 & 1.4406553537397055\\
\hline
 4 & 1.4943354630156864 & 1.4942292766819876 & 1.554231487592793\\
\hline
 5 & 1.5723087548475956 & 1.5721659394269207 & 1.6383330315011604\\
\hline
 6 & 1.6344823513487399 & 1.6343105674656457 & 1.7035992657834345\\
\hline
 7 & 1.686172134541134 & 1.685981863167616 & 1.755981689668285\\
\hline
 8 & 1.7297836238112014 & 1.7295734860288026 & 1.7991101490119972\\
\hline
 9 & 1.76697396489574 & 1.7667368261168588 & 1.8353399442411704\\
\hline
 10 & 1.7991230945455245 & 1.7988592862297756 & 1.8662733112166188\\
\hline
 11 & 1.8272910253255565 & 1.8270040145651623 & 1.8930418652242904\\
\hline
 12 & 1.8522528957634636 & 1.8519378867449332 & 1.9164696896925588\\
\hline
 13 & 1.8745640590980146 & 1.874224452938298 & 1.9371724315418803\\
\hline
 14 & 1.8946618448918564 & 1.8942880399765794 & 1.955620083731664\\
\hline
 15 & 1.9128629244215691 & 1.912469614982583 & 1.9721781798256033\\
\hline
 16 & 1.9294478151453334 & 1.929033113555414 & 1.9871356427648257\\
\hline
 17 & 1.9446372244176986 & 1.9442008003839266 & 2.000724103173354\\
\hline
 18 & 1.9586227440120254 & 1.958150441973076 & 2.0131316053179527\\
\hline
 19 & 1.9715246546331406 & 1.971031895791245 & 2.0245125261168546\\
\hline
 20 & 1.983506256178762 & 1.9829720564818833 & 2.0349948812329974\\
\hline
 21 & 1.9946168335447256 & 1.9940766273998967 & 2.0446857921469492\\
\hline
 22 & 2.0050126252872933 & 2.004429007754717 & 2.0536756356449932\\
\hline
 23 & 2.0147173693490203 & 2.014113114384068 & 2.062041234034737\\
\hline
 24 & 2.023820142205404 & 2.0231888014277133 & 2.069848336720526\\
\hline
 25 & 2.0323804104347856 & 2.0317276626270973 & 2.0771535713006974\\
\hline
 26 & 2.0404460915509004 & 2.039760355623614 & 2.0840059927096384\\
\hline
 27 & 2.0480519382492663 & 2.047346308631722 & 2.0904483243764487\\
\hline
 28 & 2.055253315368793 & 2.0545165201468136 & 2.096517960965898\\
\hline
 29 & 2.06208187185613 & 2.0613137896711415 & 2.102247784792144\\
\hline
 30 & 2.0685719367088833 & 2.067757442307411 & 2.107666835325425\\
\hline
 31 & 2.074725546620756 & 2.073881232353291 & 2.1128008619181733\\
\hline
 32 & 2.080552090798597 & 2.079715772444592 & 2.1176727829861313\\
\hline
 33 & 2.0861324542984554 & 2.085268062249322 & 2.122303069719032\\
\hline
 34 & 2.0914836026504333 & 2.0905692353602703 & 2.126710068493654\\
\hline
 35 & 2.0965752795598043 & 2.0956356458969365 & 2.1309102731861964\\
\hline
 36 & 2.1014658624003406 & 2.1004766720179333 & 2.134918556292427\\
\hline
 37 & 2.1061101650196856 & 2.105115614550015 & 2.1387483659904847\\
\hline
 38 & 2.110596788039435 & 2.109565837909762 & 2.1424118948965996\\
\hline
 39 & 2.1148757196539156 & 2.1138301822128676 & 2.145920225175551\\
\hline
 40 & 2.1190202503144295 & 2.1179304022088483 & 2.1492834538064725\\
\hline
 41 & 2.1229810323778056 & 2.1218736663243685 & 2.152510801118974\\
\hline
 42 & 2.1268059866429283 & 2.1256706784889094 & 2.1556107051654525\\
\hline
 43 & 2.1304854284787837 & 2.1293242358747553 & 2.1585909040533426\\
\hline
 44 & 2.134032347521156 & 2.132833017064226 & 2.1614585080030713\\
\hline
 45 & 2.13747622844871 & 2.1362219755168406 & 2.1642200626062627\\
\hline
 46 & 2.1407624248039046 & 2.139503475176765 & 2.1668816045206576\\
\hline
 47 & 2.143947259608725 & 2.142680408893137 & 2.1694487106425955\\
\hline
 48 & 2.147055594572068 & 2.145737152434097 & 2.171926541636693\\
\hline
 49 & 2.1500380464592714 & 2.1486924123422817 & 2.174319880568634\\
\hline
 50 & 2.1529295415335197 & 2.151552133002007 & 2.176633167275941\\
\hline
\end{tabular}
\label{tab:full-table-1}
\end{table}
\begin{table}[t]
\caption{Experimental results for $51 \le n \le 100$.}
\centering
 \begin{tabular}{|c|c|c|c|}
\hline
$n$ & $d=500$, $k=2$ & $d=1000$, $k=2$ & $ml$~\cite{assaf1996secretary} \\
\hline
 51 & 2.1557396454186146 & 2.1543228280961673 & 2.1788705290187496\\
\hline
 52 & 2.1584559348776926 & 2.1570154817019893 & 2.1810358078749115\\
\hline
 53 & 2.161084374990729 & 2.159621468892033 & 2.1831325852783596\\
\hline
 54 & 2.163638906199599 & 2.1621528943817827 & 2.185164204044484\\
\hline
 55 & 2.1661360423984997 & 2.164600746007633 & 2.1871337881796253\\
\hline
 56 & 2.1685455762131505 & 2.1669838446058245 & 2.189044260732053\\
\hline
 57 & 2.17088052166434 & 2.1692927235554205 & 2.1908983599081324\\
\hline
 58 & 2.173156975581998 & 2.17154464226942 & 2.1926986536484128\\
\hline
 59 & 2.1753868074630205 & 2.173733743540678 & 2.194447552833746\\
\hline
 60 & 2.1775580313757423 & 2.1758618646166585 & 2.196147323270252\\
\hline
 61 & 2.1796539024262835 & 2.1779381497638695 & 2.197800096583713\\
\hline
 62 & 2.181684624439103 & 2.1799527763240447 & 2.199407880138139\\
\hline
 63 & 2.183671322526126 & 2.1819134618835685 & 2.200972566079641\\
\hline
 64 & 2.185625664940774 & 2.1838237846732547 & 2.202495939594841\\
\hline
 65 & 2.1875494899093324 & 2.1856892870308204 & 2.203979686462775\\
\hline
 66 & 2.1894023170939043 & 2.1875183857519787 & 2.2054253999702604\\
\hline
 67 & 2.1911918122871388 & 2.18929267351782 & 2.2068345872528328\\
\hline
 68 & 2.192938048928508 & 2.1910205669424956 & 2.208208675116513\\
\hline
 69 & 2.1946541480422255 & 2.1927152484466426 & 2.209549015389644\\
\hline
 70 & 2.196351190908704 & 2.19436251693407 & 2.210856889848733\\
\hline
 71 & 2.1980264943777943 & 2.1959741277496128 & 2.2121335147575847\\
\hline
 72 & 2.199638990992517 & 2.1975551370165314 & 2.2133800450549295\\
\hline
 73 & 2.201195468448455 & 2.1990979377684456 & 2.2145975782220644\\
\hline
 74 & 2.2027167127907497 & 2.2006025251201864 & 2.2157871578588773\\
\hline
 75 & 2.2042098404328647 & 2.2020746269975446 & 2.21694977699375\\
\hline
 76 & 2.2056830630396034 & 2.2035241039015827 & 2.2180863811502745\\
\hline
 77 & 2.2071427242962454 & 2.204929509280723 & 2.219197871191554\\
\hline
 78 & 2.208586806232645 & 2.2063055634755595 & 2.2202851059607602\\
\hline
 79 & 2.2099723195874885 & 2.207659567166601 & 2.221348904734892\\
\hline
 80 & 2.2113220802261164 & 2.2089930212619016 & 2.2223900495070885\\
\hline
 81 & 2.212638959237257 & 2.2102894802454824 & 2.2234092871113598\\
\hline
 82 & 2.213931024081998 & 2.2115554172936966 & 2.2244073312024053\\
\hline
 83 & 2.215201601334469 & 2.2127971269009685 & 2.225384864101958\\
\hline
 84 & 2.216450741223749 & 2.214025083536994 & 2.2263425385221263\\
\hline
 85 & 2.2176831196177855 & 2.215222600886031 & 2.227280979175232\\
\hline
 86 & 2.21890853390639 & 2.216396082132078 & 2.2282007842788385\\
\hline
 87 & 2.2201231708577858 & 2.2175472034019768 & 2.2291025269638842\\
\hline
 88 & 2.2213018305735686 & 2.218679753554799 & 2.229986756593167\\
\hline
 89 & 2.2224395504175263 & 2.219797600470205 & 2.2308539999968273\\
\hline
 90 & 2.2235502739618855 & 2.2208814478953633 & 2.2317047626308755\\
\hline
 91 & 2.2246364545163155 & 2.2219459487326625 & 2.232539529664348\\
\hline
 92 & 2.225703217017416 & 2.222993377806948 & 2.233358767000202\\
\hline
 93 & 2.226755182159646 & 2.2240270358827616 & 2.234162922234636\\
\hline
 94 & 2.22779692784887 & 2.225049621939469 & 2.2349524255591584\\
\hline
 95 & 2.2288278012809983 & 2.2260483368863495 & 2.2357276906093704\\
\hline
 96 & 2.229854847099967 & 2.2270216289339064 & 2.236489115264133\\
\hline
 97 & 2.2308791883475068 & 2.227978242294249 & 2.2372370823984697\\
\hline
 98 & 2.231887015796814 & 2.228922541255891 & 2.237971960593348\\
\hline
 99 & 2.2328644036886414 & 2.2298526426158514 & 2.2386941048051834\\
\hline
 100 & 2.233816394952243 & 2.230774435151576 & 2.2394038569977424\\
\hline
\end{tabular}
\label{tab:full-table-2}
\end{table}

\end{document}